\title{Innocent strategies as presheaves\\ and interactive
  equivalences for CCS}%
 \author{Tom Hirschowitz \thanks{Both authors
    have been partially funded by the French projects CHoCo (ANR-07-BLAN-0324), PiCoq (ANR 2010 BLAN 0305
    01), and CNRS PEPS CoGIP.}  \institute{CNRS, Universit\'e de Savoie
    \\ Chamb\'ery, France}
  \and
  Damien Pous
  \institute{CNRS, Laboratoire d'Informatique de Grenoble\\
    Grenoble, France}
}
\DeclareMathOperator{\ob}{ob}
\DeclareMathOperator{\mor}{mor}
\DeclareMathOperator{\dom}{dom}
\DeclareMathOperator{\cod}{cod}
\DeclareMathOperator{\Fam}{Fam}
\newcommand{\bigcat}[1]{\ensuremath{\mathsf{#1}}}
\newcommand{\maji}[1]{\ensuremath{\mathbb{#1}}}
\newcommand{\Vars}{\bigcat{Vars}}
\newcommand{\Names}{\bigcat{Names}}
\newcommand{\commentthis}[1]{}
\newcommand{\Cospan}[1]{\bigcat{Cospan} (#1)}
\newcommand{\into}{\hookrightarrow}
\newcommand{\otni}{\hookleftarrow}
\newcommand{\xto}[1]{\xrightarrow{#1}}
\newcommand{\xot}[1]{\xleftarrow{#1}}
\renewcommand{\hat}[1]{\widehat{#1}}
\newcommand{\B}{\maji{B}}
\newcommand{\C}{\maji{C}}
\newcommand{\tick}{\daimon}
\newcommand{\Chat}{\hat{\maji{C}}}
\newcommand{\CE}{\E}
\newcommand{\CV}{\V}
\newcommand{\CVX}{\CV_X}
\newcommand{\CVY}{\CV_Y}
\newcommand{\CEXhat}{\widehat{\CE_X}}
\newcommand{\CVnhat}{\widehat{\CV_n}}
\newcommand{\CVXhat}{\widehat{\CVX}}
\newcommand{\CVYhat}{\widehat{\CVY}}
\newcommand{\CEX}{\CE_X}
\newcommand{\D}{\maji{D}}
\newcommand{\E}{\maji{E}}
\newcommand{\Eh}{\B}
\newcommand{\F}{\maji{F}}
\newcommand{\G}{\maji{G}}
\newcommand{\GG}{\bigcat{G}}
\newcommand{\MMM}{\mathcal{M}}
\newcommand{\V}{\maji{V}}
\newcommand{\W}{\maji{W}}
\newcommand{\CW}{\W}
\newcommand{\CWofX}{\CW(X)}
\newcommand{\CWofXhat}{\widehat{\CWofX}}
\newcommand{\X}{\maji{X}}
\renewcommand{\SS}{\bigcat{S}}
\newcommand{\SSX}{\SS_X}
\newcommand{\Set}{\bigcat{Set}}
\newcommand{\FinOrd}{\bigcat{FinOrd}}
\newcommand{\OPsh}[1]{{\widehat{#1}}^f}
\newcommand{\Cat}{\bigcat{Cat}}
\newcommand{\CAT}{\bigcat{CAT}}
\newcommand{\trou}{\boxempty}
\newcommand{\para}{\mathbin{\mid}}
\newcommand{\transl}[1]{\llbracket #1 \rrbracket}
\newcommand{\rond}{\circ}
\newcommand{\id}{\mathit{id}}
\newcommand{\iso}{\cong}
\newcommand{\card}[1]{|#1|}
\newcommand{\eq}{\mathrel{\texttt{:=}}}
\newcommand{\ens}[1]{\{ #1 \}}
\newcommand{\aalt}{\mathrel{|}}
\newcommand{\op}[1]{#1^{\mathit{op}}}
\newcommand{\subs}[1]{[#1]}
\newcommand{\Gam}{\Gamma}
\newcommand{\Del}{\Delta}
\newcommand{\equi}{\simeq}
\newcommand{\bureaucratic}[1]{}
\newcommand{\abar}{\overline{a}}
\newcommand{\sender}{\epsilon}
\newcommand{\receiver}{\rho}
\newcommand{\iotapos}[1]{\iota^+_{#1}}
\newcommand{\iotaneg}[1]{\iota^-_{#1}}
\newcommand{\iotaposni}{\iotapos{n,i}}
\newcommand{\iotaposnj}{\iotapos{n,j}}
\newcommand{\iotanegni}{\iotaneg{n,i}}
\newcommand{\iotanegmj}{\iotaneg{m,j}}
\newcommand{\iotanegnj}{\iotaneg{n,j}}
\newcommand{\paraof}[1]{\pi_{#1}}
\newcommand{\paralof}[1]{\pi^l_{#1}}
\newcommand{\pararof}[1]{\pi^r_{#1}}
\newcommand{\paran}{\paraof{n}}
\newcommand{\paraln}{\paralof{n}}
\newcommand{\pararn}{\pararof{n}}
\newcommand{\nuof}[1]{\nu_{#1}}
\newcommand{\nun}{\nuof{n}}
\newcommand{\tickof}[1]{\tick_{#1}}
\newcommand{\tickn}{\tickof{n}}
\newcommand{\tauof}[4]{\tau_{#1,#2,#3,#4}}
\newcommand{\taunimj}{\tauof{n}{i}{m}{j}}
\newcommand{\Pl}{\mathrm{Pl}}
\newcommand\independent{\protect\mathpalette{\protect\independenT}{\perp}} 
\def\independenT#1#2{\mathrel{\rlap{$#1#2$}\mkern2mu{#1#2}}} 
\newcommand{\bbot}{\mathord{\independent}}
\newcommand{\daimon}{\heartsuit}
\newcommand{\recin}[2]{\texttt{rec}\ #1\ \texttt{in}\ #2}
\newcommand{\Gl}{\GG}
\newcommand{\GlX}{\Gl_X}
\newcommand{\glue}{\parallel}
\newcommand{\SStoGG}{\bigcat{Gl}}
\newcommand{\state}{\sigma}
\newcommand{\statei}{\sigma'}
\newcommand{\floor}[1]{\lfloor #1 \rfloor}
\DeclareMathOperator{\Ran}{Ran}
\def\framed{%
\setbox0=\vbox\bgroup%
\advance\hsize by -2\fboxsep\advance\hsize by -2\fboxrule%
\linewidth=\hsize%
}
\def\endframed{%
\egroup\noindent\framebox[\textwidth]{\box0}\vspace*{1mm}}
\tikzset{todim/.style = {decoration={markings, mark=at position .5 with %
      {\draw (-1pt,-1pt) rectangle (1pt,1pt);}},postaction={decorate}}}
\newtheorem{defn}{Definition}
\newtheorem{lem}{Lemma}
\newtheorem{thm}{Theorem}
\newtheorem{prop}{Proposition}
\newtheorem{ex}{Example}
\newtheorem{rk}{Remark}
\begin{document}
\maketitle
\begin{abstract}
  Seeking a general framework for reasoning about and comparing
  programming languages, we derive a new view of Milner's
  CCS~\cite{Milner80}.  We construct a category $\CE$ of \emph{plays},
  and a subcategory $\CV$ of \emph{views}.  We argue that presheaves
  on $\CV$ adequately represent \emph{innocent} strategies, in the
  sense of game semantics~\cite{hyland97}. We then equip innocent
  strategies with a simple notion of interaction.  This results in an
  interpretation of CCS.

  Based on this, we propose a notion of \emph{interactive equivalence}
  for innocent strategies, which is close in spirit to Beffara's
  interpretation~\cite{bef05:lrc} of testing
  equivalences~\cite{DBLP:journals/tcs/NicolaH84} in concurrency
  theory.  In this framework we prove that the analogues of
  \emph{fair} and \emph{must} testing equivalences coincide, while
  they differ in the standard setting.
\end{abstract}
\section{Overview}

\paragraph{Theories of programming languages}
Research in programming languages is mainly technological. Indeed, it
heavily relies on techniques which are ubiquitous in the field, but
almost never formally made systematic. Typically, the definition of a
language then quotiented by variable renaming ($\alpha$-conversion)
appears in many theoretical papers about functional programming
languages. Why isn't there yet any abstract framework performing these
systematic steps for you?  Because the quest for a real \emph{theory}
of programming languages is not achieved yet, in the sense of a corpus
of results that actually help developing them or reasoning about them.
However, many attempts at such a theory do exist.

A problem for most of them is that they do not account for the
dynamics of execution, which limits their range of application. This
is for example the case of Fiore et al.'s second-order
theories~\cite{DBLP:conf/lics/Fiore08,hirscho:lam,DBLP:journals/iandc/HirschowitzM10}.
A problem for most of the other theories of programming languages is
that they neglect denotational semantics, i.e., they do not provide a
notion of model for a given language.  This is for example the case of
Milner et al.'s \emph{bigraphs}~\cite{Milner:bigraphs}, or of most
approaches to \emph{structural operational
  semantics}~\cite{PlotkinSOS}, with the notable exception of the
\emph{bialgebraic semantics} of Turi and
Plotkin~\cite{plotkin:turi:bialgebraic}.  A recent, related, and
promising approach is \emph{Kleene coalgebra}, as advocated by
Bonsangue et al.~\cite{DBLP:conf/fossacs/BonsangueRS09}.  Finally,
\emph{higher-order rewriting}~\cite{DBLP:conf/lics/Nipkow91}, and its
semantics in double categories~\cite{DBLP:conf/birthday/GadducciM00}
or in cartesian closed
2-categories~\cite{HIRSCHOWITZ:2010:HAL-00540205:2}, is not currently
known to adequately account for process calculi.

\paragraph{Towards a new approach}
The most relevant approaches to us are bialgebraic semantics and
Kleene coalgebra, since the programme underlying the present paper
concerns a possible alternative.
A first difference, which is a bit technical but may be of importance,
is that both bialgebraic semantics and Kleene coalgebra are based on
labelled transition systems (LTSs), while our approach is based on
reduction semantics. Reduction semantics is often considered more
primitive than LTSs, and much work has been devoted to deriving the
latter from the
former~\cite{DBLP:conf/concur/Sewell98,Milner:bigraphs,Sobocinski:grpos,modularLTS}.
It might thus be relevant to propose a model based only on the more
primitve reduction semantics.

More generally, our approach puts more emphasis on interaction between
programs, and hence is less interesting in cases where there is no
interaction. A sort of wild hope is that this might lead to unexpected
models of programming languages, e.g., physical ones. This could also
involve finding a good notion of morphism between languages, and
possibly propose a notion of compilation.  At any rate, the framework
is not set up yet, so investigating the precise relationship with
bialgebraic semantics and Kleene coalgebra is deferred to further
work.

How will this new approach look like? Compared to
such long-term goals, we only take a small step forward here, by
 considering a particular case, namely Milner's
CCS~\cite{Milner80}, and providing a new view of it.  This view
borrows ideas from the following lines of research: game
semantics~\cite{hyland97}, and in particular the notion of an
\emph{innocent strategy} and \emph{graphical
  games}~\cite{Miller08,DBLP:journals/entcs/HirschowitzHH09}, Krivine
realisability~\cite{DBLP:journals/tcs/Krivine03},
ludics~\cite{DBLP:journals/mscs/Girard01}, testing equivalences in
concurrency~\cite{DBLP:journals/tcs/NicolaH84,bef05:lrc}, the presheaf
approach to
concurrency~\cite{DBLP:conf/lics/JoyalNW93,DBLP:journals/mscs/KasangianL99},
and sheaves~\cite{MM}. But it is also related to, e.g., graph
rewriting~\cite{Ehrig}, 
and
computads~\cite{DBLP:journals/tcs/Burroni93}.

\paragraph{From games to presheaves} Game semantics~\cite{hyland97}
has provided fully complete models of programming languages. However,
it is based on the notion of a \emph{strategy}, i.e., a set of
\emph{plays} in the game, satisfying a few conditions. In concurrency
theory, taking as a semantics the set of accepted plays, or `traces',
is known as \emph{trace semantics}. Trace semantics is generally
considered too coarse, since it equates, for a most famous example,
the right and the wrong coffee machines, $a.(b+c)$ and $ab +
ac$~\cite{Milner80}.

An observation essentially due to Joyal, Nielsen, and Winskel is that
strategies, i.e., prefix-closed sets of plays, are actually particular
\emph{presheaves of booleans} on the category $\C$ with plays as
objects, and prefix inclusions as morphisms. By presheaves of booleans
on $\C$ we here mean functors $\op\C \to 2$, where $2$ is the preorder
category $0 \leq 1$. If a play $p$ is \emph{accepted}, i.e., mapped to
$1$, then its prefix inclusions $q \into p$ are mapped to the unique
morphism with domain $1$, i.e., $\id_1$, which entails that $q$ is
also accepted.

We consider instead presheaves (of sets) on $\C$. So, a play $p$ is
now mapped to a set $S(p)$, to be thought of as the set of ways for
$p$ to be accepted by the strategy $S$. Considering the set of players
as a team, $S(p)$ may also be thought of as the set of \emph{states}
of the team after playing $p$.

Presheaves are fine enough to account for
bisimilarity~\cite{DBLP:conf/lics/JoyalNW93,DBLP:journals/mscs/KasangianL99}.
Indeed, they are essentially forests with edges labelled in moves. For
example, in the setting where plays are finite words on an alphabet,
the wrong coffee machine may be represented by the presheaf $S$
defined by the equations on the left and pictured as on the right:
\begin{center}
  \begin{center}
    \begin{minipage}[c]{0.2\linewidth}
      \begin{itemize}
      \item $S (\epsilon) = \ens{\star}$,
      \item $S (a) = \ens{x,x'}$,
      \item $S (ab) = \ens{y}$,
      \item $S (ac) = \ens{y'}$,
      \end{itemize}
    \end{minipage}
      \hfil
    \begin{minipage}[c]{0.35\linewidth}
      \begin{itemize}
      \item $S (\epsilon \into a) = \ens{x \mapsto \star, x' \mapsto \star}$,
      \item $S (a \into ab) = \ens{y \mapsto x}$,
      \item $S (a \into ac) = \ens{y' \mapsto x'}$:
      \end{itemize}
    \end{minipage}
      \hfil
    \diag(.15,.8){%
      \& |(root)| \star \& \\
      |(x)| x \& \& |(x')| x' \\
      \\
      |(y)| y \& \& |(y')| y'. %
    }{%
      (root) edge[-,labelal={a}] (x) %
      edge[-,labelar={a}] (x') %
      (x)  edge[-,labell={b}] (y) %
      (x')  edge[-,labelr={c}] (y') %
    }
  \end{center}
\end{center}
So, in summary: the standard notion of strategy may be generalised to
account for branching equivalences, by passing from presheaves of
booleans to presheaves of sets.

\paragraph{Multiple players}
Traditional game semantics mostly emphasises two-player
games. There is an implicit appearance of three-player games in the
definition of composition of strategies, and of four-player games in
the proof of its associativity, but these games are never given a
proper status.  A central idea of graphical games, and to a lesser
extent of ludics, is the emphasis on multiple-player games.

Here, there first is a base category $\B$ of \emph{positions}, whose
objects represent configurations of players to which the game may
arrive at.  Since the game represents CCS, it should be natural that
players are related to each other via the knowledge of
\emph{communication channels}.  So, positions are bipartite graphs
with vertex sets \emph{players} and \emph{channels}, and edges from
channels to players indicating when the former is known to the latter.
As a first approximation, morphisms of positions may be thought of as
just embeddings of such graphs.

Second, there is a category $\E$ of \emph{plays}, with a functor to
$\B$ sending each play to its initial position. Plays are represented
in a more flexible way than just sequences of moves, namely using a
kind of string diagrams. This echoes the idea~\cite{Mellies04} that
two moves may be independent, and that plays should not depend on the
order in which two independent moves are performed. Furthermore, our
plays are a rather general notion, allowing, e.g., to look at how only
some players of the initial position evolve. Morphisms of plays
account both for:
\begin{itemize}
\item prefix inclusion, i.e., inclusion of a play into a longer play, and
\item position enlargement, e.g., inclusion of information about some
  players into information about more players.
\end{itemize}

Now, restricting to plays above a given position $X$, and then taking
presheaves on this category $\E_X$, we have a category of strategies
on $X$.

\paragraph{Innocence}
A fundamental idea of game semantics is the notion of
\emph{innocence}, which says that players have a restricted
\emph{view} of the play, and that their actions may only depend on
that view. 

We implement this here by defining a subcategory $\CVX \into \CEX$ of
\emph{views} on $X$, and deeming a presheaf $F$ on $\E_X$
\emph{innocent} when it is determined by its restriction $F'$ to
$\CVX$, in the sense that it is isomorphic to the \emph{right Kan
  extension}~\cite{MacLane:cwm} of $F'$ along $\op{\CVX} \into \op{\CEX}$.

Given this, it is sensible to define innocent strategies to be just
presheaves on $\CVX$, and view them as strategies via the (essential)
embedding $\CVXhat \into \CEXhat$ induced by right Kan extension.

\paragraph{Interaction}
For each position $X$, we thus have a category $\SS_X = \CVXhat$ of
innocent strategies.  In game semantics, composition of strategies is
achieved in two steps: \emph{interaction} and \emph{hiding}.
Essentially, interaction amounts to considering the three-player game
obtained by letting two two-player games interact at a common
interface. Hiding then forgets what happens at that interface, to
recover a proper two-player game.

We have not yet investigated hiding in our approach, but, thanks to
the central status of multiple-player games, interaction is accounted
for in a very streamlined way.  For any position $X$ with two
subpositions $X_1 \into X$ and $X_2 \into X$ such that each player is
in either $X_1$ or $X_2$, but none is in both, given strategies $F_1
\in \SS_{X_1}$ and $F_2 \in \SS_{X_2}$, there is a unique innocent
strategy (up to canonical isomorphism in $\SS_X$), the
\emph{amalgamation} $[F_1, F_2]$ of $F_1$ and $F_2$, whose
restrictions to $X_1$ and $X_2$ are $F_1$ and $F_2$ (again up to
isomorphism).  

Amalgamation in this sense models interaction in the sense of game
semantics, and, using the correspondence with presheaves on $\CEX$
given by right Kan extension, it is the key to defining interactive
equivalences.

\paragraph{CCS}
Next, we define a translation of CCS terms with recursive equations
into innocent strategies. This rests on \emph{spatial} and
\emph{temporal} decomposition results, which entail that innocent
strategies are a solution of a system of equations of categories (up
to equivalence). A natural question is then: which equivalence does
this translation induce on CCS terms ?

\paragraph{Interactive equivalences}
Returning to the development of our approach, we then define a notion
of \emph{interactive equivalence}, which is close in spirit to both
testing equivalences in concurrency theory and Krivine realisability
and ludics.

The game, as sketched above, allows interacting with players which are
not part of the considered position. E.g., a player in the considered
position $X$ may perform an input which is not part of any
synchronisation. A \emph{test} for a strategy $F$ on $X$ is then,
roughly, a strategy $G$ on a position $X'$ with the same names as $X$.
To decide whether $F$ \emph{passes} the test $G$, we consider a
restricted variant of the game on the `union' $X \cup X'$, forbidding
any interaction with the outside. We call that variant the
\emph{closed-world} game. Then $F$ passes $G$ iff the amalgamation
$[F, G]$, right Kan extended to $\CE_{X \cup X'}$ and then restricted
to the closed-world game, belongs to some initially fixed class of
strategies, $\bbot_{X \cup X'}$. Finally, two strategies $F$ and $F'$
on $X$ are equivalent when they pass the same tests.

Examples of $\bbot$ include:
\begin{itemize}
\item $\bbot^m$, consisting of all strategies whose maximal states
  (those that admit no strict extensions) all play a \emph{tick} move,
  fixed in advance; the tick move plays a r\^ole analogous to the
  daimon in ludics: it is the only move which is observable from the
  outside;
\item $\bbot^f$, consisting of all strategies in which all states
  admit an extension playing tick.
\end{itemize}
From the classical concurrency theory point of view on behavioural
equivalences, the first choice mimicks \emph{must} testing
equivalence, while the second mimicks \emph{fair} testing
equivalence~\cite{DBLP:journals/tcs/NicolaH84}.
%
%
As a somehow surprising result, we prove that $\bbot^f$ and $\bbot^m$
yield the same equivalence. The reason is that our notion of play is
more flexible than just sequences of moves, as we explain in
Section~\ref{subsec:fairmust}.

\paragraph{Summary}
In summary, our approach emphasises a flexible notion of
multiple-player play, encompassing both views in the sense of game
semantics, closed-world plays, and intermediate notions.  Strategies
are then described as presheaves on plays, while innocent strategies
are presheaves on views. Innocent strategies admit a notion of
interaction, or amalgamation, and are embedded into strategies via
right Kan extension. This allows a notion of testing, or interactive
equivalence by amalgamation with the test, right Kan extension, and
finally restriction to closed-world.

Our main technical contributions are then a translation of CCS terms
with recursive equations into innocent strategies, and the result that
fair and must equivalence coincide in our setting.

\paragraph{Perspectives}
So, we have defined a flexible category of multiple-player play,
combining inclusion in time (more moves) and in width (more players).
Having isolated a subcategory of views, we have defined innocent
strategies as presheaves on views, relative to a base position. We
have then translated CCS processes with recursive definitions into
innocent strategies. Then, using right Kan extension and restriction
to so-called closed-world plays, we have defined a notion of
interactive equivalence. Finally, we have proved that two interactive
equivalences, fair and must testing, coincide.

Our next task is clearly to tighten the link with CCS. Namely, we
should explore which equivalence on CCS is induced via our
translation, for a given interactive equivalence. We will start with
$\bbot^m$.  Furthermore, the very notion of interactive equivalence
might deserve closer consideration. Its current form is rather
\emph{ad hoc}, and one could hope to see it emerge more naturally from
the game. For instance, the fixed class $\bbot$ of `successful'
strategies should probably be subjected to more constraints than is
done here, but two examples were not enough to make any guess.  Also,
the paradigm of observing via the set of successful tests might admit
sensible refinements, e.g., probabilistic ones.

Another possible research direction is to tighten the link with
`graphical' approaches to rewriting, such as graph rewriting or
computads.  E.g., our plays might be presented by a
computad~\cite{Guiraud}, or be the bicategory of rewrite sequences up
to shift equivalence, generated by a graph grammar in the sense of
Gadducci et al.~\cite{DBLP:journals/tcs/GadducciHL99}. Both goals
might require some technical adjustments, however.  For computads, we
would need the usual yoga of U-turns to flexibly model our positions;
however, e.g., zigzags of U-turns are usually only equal up to a
higher-dimensional cell, while they would map to equal positions in
our setting.  For graph rewriting, the problem is that our positions
are not exactly graphs (e.g., the channels known to a player are
totally ordered).

Other perspectives include the treatment of more complicated calculi
like $\pi$ or $\lambda$. In particular, calculi with duplication of
terms will pose a serious challenge.  An even longer-term hope is to
be able to abstract over our approach. Is it possible to systematise
the process starting from a calculus as studied in programming
language theory, and generating its strategies modulo interactive
equivalence?  If this is ever understood, the next question is: when
does a translation between two such calculi preserve a given
interactive equivalence?  Finding general criteria for this might have
useful implications in programming languages, especially compilation.

\paragraph{Notation}
\begin{wrapfigure}{r}{0pt}
  \diag(.4,1){%
    |(FC)| FC \& |(FC')| FC' \\ %
    |(GD)| GD \& |(GD')| GD' %
  }{ %
    (FC) edge[labelu={F (f)}] (FC') %
    edge[labell={u}] (GD) %
    (FC') edge[labelr={u'}] (GD') %
    (GD) edge[labeld={G (g)}] (GD') %
    }
\end{wrapfigure}
The various categories and functors used in the development are summed
up with a short description in Table~\ref{tab:summary}. There, given
two functors $\C \xto{F} \E \xot{G} \D$, we denote (slightly
abusively) by $\C \downarrow_\E \D$ the \emph{comma} category: it has
as objects triples $(C,D,u)$ with $C \in \C$, $D \in \D$, and $u
\colon F (C) \to G (D)$ in $\E$, and as morphisms $(C,D,u) \to
(C',D',u')$ pairs $(f,g)$ making the square on the right
commute. Also, when $F$ is the identity on $\C$ and $G \colon 1 \to
\C$ is an object $C$ of $\C$, this yields the usual \emph{slice}
category, which we abbreviate as $\C / C$. Finally, the category of
presheaves on any category $\C$ is denoted by $\Chat = \Set^{\op\C}$.

\begin{table}[t]
  \centering
  \begin{tabular}[t]{|c|c|}
    \hline
    Category & Description of its objects \\ \hline
    $\Chat$ &  `diagrams'  \\
    $\B \into \Chat$ &  positions \\
    $\CE \into (\B \downarrow_{\Chat} \Chat) $ & plays \\
    $\CEX = (\CE \downarrow_{\B} (\B/X))$ & plays on a position $X$ \\
    $\CVX \into \CEX$ & views on $X$ \\
    $\SS_X = \CVXhat$ & innocent strategies on $X$ \\ 
    $\CW \into \CE$ & closed-world plays \\
    $\CWofX$ & closed-world plays on $X$ \\
    \hline
  \end{tabular}
  \caption{Summary of categories and functors}
  \label{tab:summary}
\end{table}

Furthermore, we denote, for any category $\C$, by $\ob (\C)$ its set
of obects and by $\mor (\C)$ its set of morphisms.  For any functor $F
\colon \C \to \D$, we denote by $\op\F \colon \op\C \to \op \D$ the
functor induced on opposite categories, defined exactly as $F$ on both
objects and morphisms. Also, recall that an \emph{embedding} of
categories is an injective-on-objects, faithful functor. This admits
the following generalisation: a functor $F \colon \C \to \D$ is
\emph{essentially injective on objects} when $FC \iso FC'$ implies $C
\iso C'$. Any faithful, essentially injective on objects functor is
called an \emph{essential embedding}.

\section{Plays as string diagrams}
We now describe our approach more precisely, starting with
multiple-player plays.  We remain at a not completely formal level,
especially for presenting plays, because our experience is that most
readers get stuck on that. However, the interested reader may have a
look at the formal definition in Appendix~\ref{sec:diagrams}.

\subsection{Positions}
\begin{wrapfigure}{r}{0pt}
      \diagramme[stringdiag={.6}{.6}]{}{%
        \path[-,draw] %
        (a) edge (j1) edge (j2) %
        (c) edge (j2) edge (j3) %
        (b) edge (j1) edge (j2) edge (j3) %
        ; %
        \node[diagnode,at= (j3.south east)] {\ \ \ } ; %
      }{%
        \& \canal{a}     \& \&  \canal{c} \\
    \joueur{j1} \& \& \joueur{j2} \& \& \joueur{j3} \\
    \& \& \canal{b} \&    
  }{%
  }%
\end{wrapfigure}
Since the game represents CCS, it should be natural that players are
related to each other via the knowledge of \emph{communication
  channels}. This is represented by a kind of\footnote{Only `a kind
  of', because, as mentioned above, the channels known to a player are
  linearly ordered.}  bipartite graph: an example position is on the
right.  Bullets represent players, circles represent channels, and
edges indicate when a player knows a channel. The channels known by a
player are linearly ordered, e.g., counterclockwise, starting from the
lower one.  Formally, as explained in Appendix~\ref{sec:diagrams},
positions are presheaves over a certain category $\C_1$.  Morphisms of
positions are natural transformations, which are roughly morphisms of
graphs, mapping players to players and channels to channels. In full
generality, morphisms thus do not have to be injective.  However, let
us restrict to injective morphisms for this expository
paper. Positions and morphisms between them form a category $\B$.

\subsection{Moves}
\emph{Plays} are then defined as glueings of \emph{moves} derived from
the very definition of CCS, which we now sketch. 
Moves come in three layers:
\begin{itemize}
\item \emph{basic} moves, which are used to define views below,
\item \emph{full} moves, which are used in the statement of temporal
  decomposition (Theorem~\ref{thm:temp}),
\item and \emph{closed-world} moves, which are used to define
  closed-world plays (which in turn are central to the notion of
  interactive equivalence).
\end{itemize}
It might here be worth providing some intuition on the difference
between the three notions of move. A closed-world move roughly
consists of some players (one or more) synchronising together in some
specified way, each of them forking into several `avatars'. A full
move gathers what concerns one of the players involved in such a
synchronisation.  A basic move is what one of its avatars remembers of
the move.

Let us start with a closed-world move which concerns only one player,
and which is hence also a full move: \emph{forking} In the case of a
player knowing two channels, the forking move is represented by the
diagram $P$:
    \begin{equation}
  \diagramme[stringdiag={.4}{.6}]{}{
    \node[diagnode,at= (s1.south east)] {\ \ \ ,} ; %
  }{%
    \& \& \&  \& \joueur{t_2} \\
    \canal{t0} \& \& \& \& \& \& \& \canal{t1} \& \& \\ 
    \& \& \joueur{t_1}  \&  \\
    \& \ \& \\
    \coord{i0} \& \& \& \couppara{para} \& \& \& \& \coord{i1} \& \& \\ 
    \& \ \& \\
    \& \&  \\
    \canal{s0} \& \& \& \joueur{s} \& \& \& \& \canal{s1} \& \& 
  }{%
    (para) edge[-] (t_1)
    (para) edge[-] (t_2) %
    (t0) edge[-] (t_1) %
    (t0) edge[-] (t_2) %
    (t1) edge[-,fore] (t_1) %
    (t1) edge[-] (t_2) %
    (s0) edge[-] (s) %
    (s1) edge[-] (s) %
    (s0) edge[-] (t0) %
    (s1) edge[-] (t1) %
    (s) edge[-] (para) %
    (i0) edge[-,gray,very thin] (para) %
    (i1) edge[-,gray,very thin] (para) %
  }\label{eq:para}
\end{equation}
to be thought of as a move from the bottom position $X$ \hfill
  \diagramme[stringdiag={.4}{.6}]{baseline=(s.south)}{}{ %
    \canal{s0} \& \& \& \joueur{s} \& \& \& \& \canal{s1} \& \& %
  }{%
    (s) edge[-] (s1) edge[-] (s0) %
  }
\linebreak\noindent to the top position $Y$
\hfill
  \diagramme[stringdiag={.4}{.6}]{}{
            \node[diagnode,at= (t1.south east)] {\ \ \ .} ; %
  }{%
    \& \& \&  \& \joueur{t_2} \\
    \canal{t0} \& \& \& \& \& \& \& \canal{t1} \& \& \\ 
    \& \& \joueur{t_1}  \&  \\
  }{%
    (t0) edge[-] (t_1) %
    (t0) edge[-] (t_2) %
    (t1) edge[-] (t_1) %
    (t1) edge[-] (t_2) %
  }

  The whole move may be viewed as a cospan $X \into P \otni Y$ in the
  category of diagrams (technically a presheaf category $\Chat$). Both
  legs of the cospan are actually monic arrows in $\Chat$, as will be
  the case for all cospans considered here. The vertical lines
  represent dots (channels and players) moving in time, upwards. So
  for example the left- and right-hand borders are just channels
  evolving in time, not noticing that the represented player forks
  into two. The surfaces spread between those vertical lines represent
  links (edges in the involved positions) evolving in time. For
  example, each link here divides into two when the player forks, thus
  representing the fact that both of the newly created players retains
  knowledge of the corresponding name. As for channels known to a
  player, the players and channels touching the black triangle are
  ordered: in particular there are different `ports' for the initial
  player and its two avatars.

There is of course an instance $\paran$ of forking for
each $n$, according to the number of channels known to the player.
Again, these explanations are very informal, but the diagrams have a
very precise combinatorial definition.

The above forking move has two \emph{basic} sub-moves, \emph{left} and
\emph{right half-forking}, respectively represented by the diagrams
\begin{equation}
    \diagramme[stringdiag={.4}{.6}]{}{
  }{%
    \& \& \&  \&  \\
    \canal{t0} \& \& \& \& \& \& \& \canal{t1} \& \& \\ 
    \& \& \joueur{t_1}  \&  \\
    \& \ \& \\
    \coord{i0} \& \& \& \coupparacreux{para} \& \& \& \& \coord{i1} \& \& \\ 
    \& \ \& \\
    \& \&  \\
    \canal{s0} \& \& \& \joueur{s} \& \& \& \& \canal{s1} \& \& 
  }{%
    (t0) edge[-] (t_1) %
    (t1) edge[-] (t_1) %
    (s0) edge[-] (s) %
    (s1) edge[-] (s) %
    (s0) edge[-] (t0) %
    (s1) edge[-] (t1) %
    (s) edge[-] (para) %
    (para) edge[-] (t_1) %
    (i0) edge[-,gray,very thin] (para) %
    (i1) edge[-,gray,very thin] (para) %
  }
\mbox{and} \hspace*{.1\textwidth}
    \diagramme[stringdiag={.4}{.6}]{}{
    \node[diagnode,at= (s1.south east)] {\ \ \ ,} ; %
  }{%
    \& \& \&  \& \joueur{t_2} \\
    \canal{t0} \& \& \& \& \& \& \& \canal{t1} \& \& \\ 
    \& \&   \&  \\
    \& \ \& \\
    \coord{i0} \& \& \& \coupparacreux{para} \& \& \& \& \coord{i1} \& \& \\ 
    \& \ \& \\
    \& \&  \\
    \canal{s0} \& \& \& \joueur{s} \& \& \& \& \canal{s1} \& \& 
  }{%
    (t0) edge[-] (t_2) %
    (t1) edge[-] (t_2) %
    (s0) edge[-] (s) %
    (s1) edge[-] (s) %
    (s0) edge[-] (t0) %
    (s1) edge[-] (t1) %
    (s) edge[-] (para) %
    (para) edge[-] (t_2) %
    (i0) edge[-,gray,very thin] (para) %
    (i1) edge[-,gray,very thin] (para) %
  }\label{eq:paraviews}
\end{equation}
which represent what each of the `avatars' of the forking player sees
of the move. If a play contains both of the latter moves, then it
contains the full move~\eqref{eq:para}.  Forking, being the only move
with more than one player in the final position, is the only one
subject to such a decomposition. We call $\paraln$ and $\pararn$ the
respective instances of the left-hand and right-hand basic moves for a
player knowing $n$ names.

Let us now review the other basic moves of the game, which are also
full. As for forking, there is an instance of each of them following
the number $n$ of channels known to the player, and we only show the
case $n = 2$.  First, we have the \emph{tick} move $\tickn$, whose
role is to define successful plays, and the usual \emph{name
  creation}, or \emph{restriction} $\nu$ of CCS, here $\nun$. They are
graphically represented as
\begin{center}
  \diagramme[stringdiag={.6}{1}]{}{
    \path[-] (a) edge (a') %
    edge (p) %
    (tick) edge[shorten <=-1pt] (p) edge[shorten <=-1pt] (p') %
    (p') edge (a') edge (b') %
    (b) edge (p) edge (b') %
    ; %
  }{ %
    \canal{a'} \& \joueur{p'} \& \canal{b'} \\ %
    \& \couptick{tick} \& \\ %
    \canal{a} \& \joueur{p} \& \canal{b} \\
  }{%
  }
  \hfil and \hfil %
  \diagramme[stringdiag={.2}{.5}]{}{%
    \node[coordinate] (inter) at (intersection cs: %
    first line={(s)-- (s0)}, %
    second line={(s1)-- (t1)}) {} ; %
    \path[draw] (s) edge (inter) ; %
    \path[-,draw] %
    (s1) edge (s) %
    (t1) edge (t) %
    (t0) edge (t) %
    (t2) edge (t) %
    (t) edge (nu) %
    (s) edge (nu) %
    (s0) edge[fore] (t0) %
    (s1) edge[fore] (t1) %
    (nu) edge[gray,very thin] (t2) %
    (inter) edge (s0) %
    ; %
  }{%
    \canal{t0}     \& \&  \\
    \& \& \joueur{t} \& \& \canal{t2} \& \\
    \& \canal{t1} \&    \\
    \\
    \& \& \coupnu{nu} \& \&  \& \\
    \\
    \canal{s0}     \& \&  \& \&  \\
    \& \& \joueur{s} \& \& \& \\\\
    \& \canal{s1} \&  \& \&  
  }{%
  }%
\end{center}
respectively.  We finally have input and output, $a.P$ and $\abar.P$
in CCS, respectively $\iotaposni$ and $\iotanegni$ here ($n$ is the
number of known channels, $i$ is the index of the channel bearing the
synchronisation).  Here, output on the right-hand name and input on
the left-hand name respectively look like
\begin{center}
  \diagramme[stringdiag={.6}{1}]{}{
    \path[-] (a) edge (a') %
    edge (p) %
    (in.west) edge (p) edge (p') %
    (p') edge (a') edge (b') %
    (b) edge (p) edge (b') %
    ; %
    \path (in) --  (bin) node[coordinate,pos=.3] (intip) {} ; %
    \path[-] (in) edge[->,>=stealth,very thick] (intip) ; %
    \foreach \x/\y in {p/p',b/b'} \path[-] (\x) edge (\y) ; %
  }{ %
    \canal{a'} \& \joueur{p'} \& \canal{b'} \\ %
    \& \coupout{in}{180} \& \coord{bin} \\ %
    \canal{a} \& \joueur{p} \& \canal{b} \\
  }{%
  }
  \hfil and \hfil
  \diagramme[stringdiag={.6}{1}]{}{
    \path[-] (a) edge (a') %
    edge (p) %
    (in.west) edge (p) edge (p') %
    (p') edge (a') edge (b') %
    (b) edge (p) edge (b') %
    ; %
    \node[diagnode,at= (b.south east)] {\ \ \ .} ; %
    \path (bin) --  (in) node[coordinate,pos=.7] (intip) {} ; %
    \path[-] (intip) edge[->,>=stealth,very thick] (in) ; %
    \foreach \x/\y in {p/p',a/a'} \path[-] (\x) edge (\y) ; %
  }{ %
    \canal{a'} \& \joueur{p'} \& \canal{b'} \\ %
    \coord{bin} \& \coupout{in}{0} \& \\ %
    \canal{a} \& \joueur{p} \& \canal{b} \\
  }{%
  }
\end{center}
We have now defined all basic and full moves, and move on to define
closed-world moves. Forking, name creation, and tick are all
closed-world moves, and there is only one more closed-world move,
which models CCS-like synchronisation.  For all $n$ and $m$,
representing the numbers of channels known to the players involved in
the synchronisation, and for all $i \in n$, $j \in m$ (seeing $n$ and
$m$ as finite ordinals), there is a \emph{synchronisation} $\taunimj$,
represented, (in the case where one player outputs on channel $1 \in
2$ and the other inputs on channel $0 \in 1$,) by
\begin{center}
  \diagramme[stringdiag={.4}{.8}]{}{%
    \node[coordinate] (inter) at (intersection cs: %
    first line={(s)-- (s0)}, %
    second line={(s1)-- (t1)}) {} ; %
    \path[draw] (s) edge (inter) ; %
      \path[-] %
      (s1) edge (s) %
      (s0) edge (inter) %
      (s2) edge (s) %
      (s2) edge (s') %
      (t1) edge (t) %
      (t0) edge (t) %
      (t2) edge (t) %
      (t2) edge (t') %
      (t) edge (iota.west) %
      (s) edge (iota.west) %
      (s') edge (iota'.east) %
      (t') edge (iota'.east) %
      (t'0) edge (t') %
      (s'0) edge (s') %
      (s'0) edge[fore] (t'0) %
      (s0) edge[fore] (t0) %
      (s1) edge[fore] (t1) %
      (s2) edge (t2) %
      ; %
    \path[-] %
    (iota) edge[fore,densely dotted] (iota') %
    ; %
    \node[diagnode,at= (s'0.south east)] {\ \ \ .} ; %
    \path (iota) --  (iota') node[coordinate,pos=.1] (iotatip) {} node[coordinate,pos=.9] (iotatip') {} ; %
      \path[-] (iota) edge[->,>=stealth,very thick] (iotatip) ; %
      \path[-] (iotatip') edge[->,>=stealth,very thick] (iota') ; %
      \foreach \x/\y in {s/t,s'/t'} \path[-] (\x) edge (\y) ; %
  }{%
    \canal{t0} \& \&  \&  \\
    \& \& \joueur{t} \& \& \canal{t2} \& \& %
    \joueur{t'} \& \&  \canal{t'0} \\
    \& \canal{t1} \&  \& \\
    \& \& \coupout{iota}{0} \& \&  \& \& \coupin{iota'}{0} \\
    \canal{s0} \& \&  \& \& \&  \& \\
    \& \& \joueur{s} \& \& \canal{s2} \& \& \joueur{s'} \& \& \canal{s'0} \\ \& \canal{s1} \& \& %
  }{%
  }%
  \end{center}
  Let us emphasise here that the dotted wire in the picture is
  actually a point in the formal representation (i.e., an element of
  the corresponding presheaf).

  Table~\ref{tab:moves} summarises the various classes of moves, and
  altogether they form the set of \emph{moves}.
  \begin{table}[t]
    \centering
    \begin{tabular}[t]{c|c|c}
      Basic & Full & Closed-world \\  \hline & &  \\[-.5em]
      \begin{minipage}[c]{0.2\linewidth} \centering
        Left half-forking \\
        Right half-forking 
      \end{minipage}
      & 
      Forking & Forking \\[1em]
      \begin{minipage}[c]{0.2\linewidth} \centering
        Input \\
        Output
      \end{minipage}
      & 
      \begin{minipage}[c]{0.2\linewidth} \centering
        Input \\
        Output
      \end{minipage}
      & 
      Synchronisation \\[1em]
      Tick & Tick & Tick \\
      Name creation & Name creation & Name creation 
    \end{tabular}
    \caption{Summary of classes of moves}
    \label{tab:moves}
  \end{table}

\subsection{Plays}
We now sketch how plays are defined as glueings of moves. We start
with the following example, depicted in Figure~\ref{fig:exex}. The
initial position consists of two players $p_1$ and $p_2$ sharing
knowledge of a name $a$, each of them knowing another name, resp.\
$a_1$ and $a_2$.  The play consists of four moves: first $p_1$ forks
into $p_{1,1}$ and $p_{1,2}$, then $p_2$ forks, and then $p_{1,1}$
does a left half-fork into $p_{1,1,1}$; finally $p_{1,1,1}$
synchronises (as the sender) with $p_{2,1}$. Now, we reach the limits
of the graphical representation, but the order in which the forks of
$p_1$ and $p_2$ occur is irrelevant: if $p_2$ had forked before $p_1$,
we would obtain the same play. This means that glueing the various
parts of the picture in Figure~\ref{fig:exex} in different orders
formally yields the same result (although there are subtle issues in
representing this result graphically in a canonical way).

Now, recall that moves may be seen as cospans $X \into M \otni
Y$. Now, consider an \emph{extended} notion of move, which may occur
in a larger position than just one player (two for synchronisation).
For example, the moves in Figure~\ref{fig:exex} are extended moves in
this sense.
We may now state:
  \begin{defn}
    A \emph{play} is an embedding $X_0 \into U$ in the category
    $\Chat$ of diagrams, isomorphic to a possibly denumerable
    `composition' of moves 
    in the (bi)category $\Cospan{\Chat}$ of cospans in $\Chat$, i.e.,
    obtained as a colimit:
      \begin{center}
        \diagramme[diagorigins={.6}{1.3}]{}{%
          \pbk{X_0}{U}{X_nnn} %
        }{%
          |(X_0)| X_0 \& \& |(X_1)| X_1 \& \ldots \& |(X_n)| X_n \&  \& |(X_nn)| X_{n+1} \& \& |(X_nnn)| X_{n+2} \& \ldots \\
          \& |(M_0)| M_0 \& \& \ldots \& \& |(M_n)| M_n \& \& |(M_nn)| M_{n+1} \& \ldots \\
          \& {\ } \\
          \& \& \& \& |(U)| U, %
        }{%
          (X_0) edge[into] (M_0) %
          (X_1) edge[linto] (M_0) %
          (X_n) edge[into] (M_n) %
          (X_nn) edge[linto] (M_n) %
          (X_nn) edge[into] (M_nn) %
          (X_nnn) edge[linto] (M_nn) %
          (M_0) edge[into] (U) %
          (M_n) edge[linto] (U) %
          (M_nn) edge[linto] (U) %
        }
      \end{center}
      where each $X_i \into M_i \otni X_{i+1}$ is 
      an extended move.
    \end{defn}
    Notation: we often denote plays just by $U$, leaving the embedding
    $X \into U$ implicit.
    \begin{rk}
      For finite plays, one might want to keep track not only of the
      initial position, but also of the final position. This indeed
      makes sense. Finite plays then compose `vertically', and form a
      double category. But infinite plays do not really have any final
      position, which explains our definition.
    \end{rk}
\begin{figure}[t]
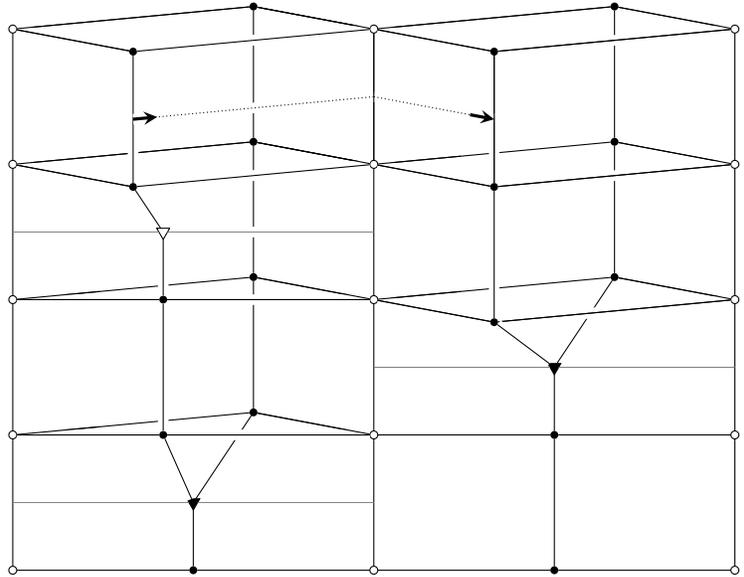

  \begin{center}
    \diagramme[stringdiag={.3}{.4}]{}{ \node[diagnode,at= (a2.south
      east)] {\ \ \ .} ; %
      \foreach \x/\y in {a1/a1',a1'/a1'',a1''/a1''',a1'''/a1'''',a2/a2',a2'/a2'',a2''/a2''',a2'''/a2'''',a/a',a'/a'',a''/a''',a'''/a'''',%
        p1/parap1,parap1/p12,p12/p12',p12'/p12'',a1/p1,p1/a,a1'/p12,p12/a',a1''/p12',p12'/a'',a/p2,p2/a2,a'/p2',%
        p2/p2',p2'/a2',p2'/parap2,parap2/p22,a''/p22,p22/a2'',a'''/p22',p22'/a2''',p12''/p12''',p22/p22',p22'/p22'',%
        a''''/p22'',p22''/a2''''%
      } 
      \path[-] (\x) edge (\y) ; %
      \foreach \a/\p/\b in {a1'''/p12''/a''',
        a1'''/p111/a''',a1''''/p12'''/a'''',
        a1''''/p111'/a'''',%
        a1'/p11/a',a1''/p11'/a'',a''/p21/a2'',a'''/p21'/a2''',a''''/p21''/a2''''} 
      \path[-]  (\p) edge[fore] (\a) edge[fore] (\b) %
      ; %
      \foreach \a/\p/\b in {a1'''/p12''/a''',
        a1'''/p111/a''',a1''''/p12'''/a'''',
        a1''''/p111'/a'''',%
        a1'/p11/a',a1''/p11'/a'',a''/p21/a2'',a'''/p21'/a2''',a''''/p21''/a2''''} 
      \path[-]  (\p) edge (\a) edge (\b) %
      ; %
      \foreach \x/\y in {parap1/p11,p11/p11',p11'/parap11,
        p111/a''',%
        p111'/a'''',parap2/p21,p21/p21',p21'/p21'',p111/p111'} %
      \path[-] (\x) edge[fore] (\y) %
      ; %
      \foreach \x/\y in {
        p12/a',p12'/a'',p12''/a''',p22/a2'',p22'/a2''',p22''/a2'''',
        p12'''/a'''',p22''/a''''} %
      \path[-] (\x) edge (\y) ; %
      \foreach \x/\y in {p22/a'',p22'/a''',p12/a1',p12'/a1''} %
      \path[-] (\x) edge[shorten <=2cm] (\y) ; %
      \foreach \para/\a/\ai in {parap1/a1parap1/aparap1,
        parap2/a2parap2/aparap2, parap11/a1parap11/aparap11}
      \path[draw] (\para) edge[-,gray,very thin] (\a) %
      edge[-,gray,very thin] (\ai) %
      ; %
      \path[-] (parap11) edge[-] (p111) ; %
      \path[-] (parap11) edge[-,fore,gray,very thin,shorten <=1pt, shorten >=1pt] (aparap11) ; %
      \path[-] (out11) edge[fore,densely dotted] (aout11) %
      (aout11) edge[fore,densely dotted] (in11) %
      ; %
      \path (out11) --  (aout11) node[coordinate,pos=.1] (out11tip) {} ; %
      \path (in11) -- (aout11) node[coordinate,pos=.2] (in11tip) {} ; %
      \path[-] (out11) edge[->,>=stealth,very thick] (out11tip) ; %
      \path[-] (in11tip) edge[->,>=stealth,very thick] (in11) ; %
      \foreach \x/\y in {a'''/a'''',p21'/p21''} \path[-] (\x) edge (\y) ; %
    }{%
      \& \& \& \& \& \& \& \& \joueur{p12'''} \& \& \& \& \& \& \& \& \& \& \& \& \joueur{p22''} \& \& \\
      \canal{a1''''} \& \& \& \& \& \& 
      \& \& \& \& \& \& \canal{a''''} \& \& \& \& \& \& \& \& \& \& \& \& \canal{a2''''} \\
      \& \& \& \& \joueur{p111'} \& \& \& \& \& \& \& \& \& \& \& \& \joueur{p21''} \& \& \\
      \& \& \\
      \coord{a1out11} \& \& \& \&  \&  \& \& \& \& \& \& \& \coord{aout11} \& \& \& \& \& \& \\
      \coord{a1out11} \& \& \& \& \coupout{out11}{4} \&  \& \& \& \& \& \& \&  \& \& \&  \& \coupin{in11}{-7} \& \& \\
      \& \& \& \& \& \& \& \& \joueur{p12''} \& \& \& \& \& \& \& \& \& \& \& \& \joueur{p22'} \& \& \\
      \canal{a1'''} \& \& \& \& \& \& 
      \& \& \& \& \& \& \canal{a'''} \& \& \& \& \& \& \& \& \& \& \& \& \canal{a2'''} \\
      \& \& \& \& \joueur{p111} \& \& \& \& \& \& \& \& \& \& \& \& \joueur{p21'} \& \& \\
      \& \& \\
      \coord{a1parap11} \& \& \& \& \& \coupparacreux{parap11} \& \& \& \& \& \& \& \coord{aparap11} \& \& \& \& \& \& \\
      \& \& \\
      \& \& \& \& \& \& \& \& \joueur{p12'} \& \& \& \& \& \& \& \& \& \& \& \& \joueur{p22} \& \& \\
      \canal{a1''} \& \& \& \& \& \joueur{p11'} \& \& \& \& \& \& \& \canal{a''} \& \& \& \& \& \& \& \& \& \& \& \& \canal{a2''} \\
      \& \& \& \&  \& \& \& \& \& \& \& \& \& \& \& \& \joueur{p21} \& \& \\
      \& \& \\
      \coord{a1parap2} \& \& \& \& \&  \& \& \& \& \& \& \& \coord{aparap2} \& \& \& \& \& \& \couppara{parap2} \& \& \& \& \& \& \coord{a2parap2} \\
      \& \& \\
      \& \& \& \& \& \& \& \& \joueur{p12} \& \& \& \& \& \& \& \& \& \& \& \&  \& \& \\
      \canal{a1'} \& \& \& \& \& \joueur{p11} \& \& \& \& \& \& \& \canal{a'} \& \& \& \& \& \& \joueur{p2'} \& \& \& \& \& \& \canal{a2'} \\
      \& \& \& \&  \& \& \& \& \& \& \& \& \& \& \& \&  \& \& \\
      \& \& \\
      \coord{a1parap1} \& \& \& \& \&  \& \couppara{parap1} \& \& \& \& \& \& \coord{aparap1} \& \& \& \& \& \&  \& \& \& \& \& \&  \\
      \& \& \\
      \& \& \& \& \& \& \& \&  \& \& \& \& \& \& \& \& \& \& \& \&  \& \& \\
      \canal{a1} \& \& \& \& \&  \& \joueur{p1} \& \& \& \& \& \& \canal{a} \& \& \& \& \& \& \joueur{p2} \& \& \& \& \& \& \canal{a2} \\
    }{%
    }
  \end{center}
  \caption{An example play}
\label{fig:exex}
\end{figure}
  
\begin{wrapfigure}{r}{0pt}
      \diag(.6,.8){%
        |(U)| U \& |(V)| V \\
        |(X)| X \& |(Y)| Y. %
      }{%
        (X) edge[into] (U) edge[labelu={h}] (Y) %
        (U) edge[labelu={k}] (V) %
        (Y) edge[into] (V) %
      }
\end{wrapfigure}
Let a morphism $(X \into U) \to (Y \into V)$ of plays be a pair
$(h,k)$ making the diagram on the right commute in $\Chat$.  This
permits both inclusion `in width' and `in height'. E.g., the play
consisting of the left-hand basic move in~\eqref{eq:paraviews} embeds
in exactly two ways into the play of Figure~\ref{fig:exex}. (Only two
because the image of the base position must lie in the base position
of the codomain.)  We have:
\begin{prop}
  Plays and morphisms between them form a category $\E$.
\end{prop}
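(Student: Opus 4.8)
The plan is to exhibit $\E$ as a full subcategory of a comma category, which is then automatically a category, so that the proof reduces to checking that nothing in the definitions of play and of morphism of plays breaks fullness.

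First I would recall that, by the general comma-category construction fixed in the Notation paragraph --- instantiated with the inclusion $\B \into \Chat$ on the left and $\id_{\Chat}$ on the right --- the comma category $\B \downarrow_{\Chat} \Chat$ is a legitimate category: its objects are triples $(X, U, u)$ with $X$ a position, $U$ a diagram, and $u \colon X \to U$ a morphism of $\Chat$; its morphisms $(X, U, u) \to (Y, V, v)$ are pairs $(h, k)$, with $h \colon X \to Y$ in $\B$ and $k \colon U \to V$ in $\Chat$, such that $k \circ u = v \circ h$; and identities and composition are componentwise, the category axioms being inherited from those of $\B$ and of $\Chat$. This is the inclusion $\CE \into (\B \downarrow_{\Chat} \Chat)$ already announced in Table~\ref{tab:summary}.

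Next I would observe that a play, as just defined, is nothing but an object $(X_0, U, u)$ of $\B \downarrow_{\Chat} \Chat$ whose component $u$ is an embedding and which is, in addition, isomorphic in $\B \downarrow_{\Chat} \Chat$ to a possibly denumerable composition of extended moves in $\Cospan{\Chat}$; both of these are conditions on the object, not extra structure, and impose nothing on morphisms. Symmetrically, a morphism of plays is, by the text's definition, precisely a morphism of $\B \downarrow_{\Chat} \Chat$ between two such objects --- the commuting square with $h$ in $\B$ and $k$ in $\Chat$ --- with no side condition. Hence $\E$ is exactly the full subcategory of $\B \downarrow_{\Chat} \Chat$ spanned by the plays, and a full subcategory of a category is a category; this proves the claim.

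For readers who prefer a hands-on argument, I would also note that it unwinds to a short check: $(\id_X, \id_U)$ is a morphism of plays and a two-sided unit; for composable morphisms $(h, k)$ and $(h', k')$ the pair $(h' h, k' k)$ again makes the pasted outer square commute, with $h' h$ a morphism of positions (positions are closed under composition, and composites of monos are monic) and $k' k$ a morphism of $\Chat$; associativity and unitality then hold componentwise. I do not anticipate any real obstacle here --- the content is entirely bookkeeping. The one point genuinely worth stating is the conceptual one above: since ``being an embedding'' and ``being a composition of moves'' are properties of the object alone, restricting the comma category to plays leaves a \emph{full}, hence bona fide, subcategory, and the (possibly infinite) colimit witnessing a given play plays no role in the category structure of $\E$ itself.
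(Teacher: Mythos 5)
Your proof is correct and follows exactly the route the paper intends: Table~\ref{tab:summary} already presents $\CE$ as a subcategory of the comma category $\B \downarrow_{\Chat} \Chat$, and your observation that ``being a play'' is a property of objects only, so that $\E$ is the \emph{full} subcategory spanned by plays, is precisely the (unstated) justification. The componentwise check of identities, composition, and closure under composites of injective morphisms is the right bookkeeping and raises no issues.
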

There is a projection functor $\E \to \B$ mapping each play $X
\into U$ to its base position $X$.  This functor has a section, which
is an embedding $\B \into \E$, mapping each position $X$ to the empty
play $X \into X$ on $X$.

\begin{rk}[Size]
  The category $\CE$ is only locally small. Since presheaves on a
  locally small category are less well-behaved than on a small
  category, we will actually consider a skeleton of $\CE$.
  Because $\CE$ consists only of denumerable presheaves, this skeleton
  is a small category. Thus, our presheaves in the next section may be
  understood as taken on a small category.
\end{rk}

\begin{rk}
  Plays are not very far from being just (infinite) abstract syntax
  trees (or forests) `glued together along channels'.
\end{rk}

\subsection{Relativisation}
If we now want to restrict to plays over a given base position $X$, we may consider
\begin{defn}
  Let the category $\CEX$ have
  \begin{itemize}
  \item as objects pairs of a play $Y \into U$ and a morphism $Y \to X$,
  \item as morphisms $(Y \into U) \to (Y' \into U')$ all pairs $(h,k)$
    making the following diagram commute:
    \begin{center}
  \diag(.25,1.5){%
    |(U)| U \& \& |(U')| U' \\
    \\
    |(Y)| Y \& \& |(Y')| Y' \\
    \& |(X)| X %
  }{%
    (U) edge[labelu={k}] (U') %
    (Y) edge[into] (U) %
    edge [labelu={h}] (Y') %
    edge (X)
    (Y') edge[into] (U') %
    edge (X)
  }%
    \end{center}
    in $\Chat$.
  \end{itemize}
\end{defn}
We will usually abbreviate $U \otni Y \into X$ as just $U$ when no
ambiguity arises.  As for morphisms of positions, in full generality,
$h$ and $k$, as well as the morphisms $Y \to X$, do not have to be
injective. However, for the purpose of this expository paper, let us
again restrict to injective $h, k$, and $Y \to X$.
\begin{ex}
  Let $X$ be the position
  \diagramme[stringdiag={.1}{.2}]{baseline=(s_2.south)}{}{ %
    \canal{s0} \& \& \& \joueur{s_1} \& \& \& \& %
    \canal{s1} \& \& \& \joueur{s_2} \& \& \& \& %
    \canal{s2} \& \& \& \joueur{s_3} \& \& \& \& %
    \canal{s3} %
  }{%
    (s_1) edge[-] (s1) edge[-] (s0) %
    (s_2) edge[-] (s1) edge[-] (s2) %
    (s_3) edge[-] (s2) edge[-] (s3) %
  }.  %
  The play in Figure~\ref{fig:exex}, say $Y \into U$, equipped with
  the injection $Y \into X$ mapping the two players of $Y$ to the two
  leftmost players of $X$, is an object of $\CEX$.
\end{ex}
\section{Innocent strategies as sheaves}
Now that the category of plays is defined, we move on to defining
innocent strategies.  There is a notion of a Grothendieck
\emph{site}~\cite{MM}, which consists of a category equipped with a
(generalised) topology. On such sites, one may define a category of
sheaves, which are very roughly the presheaves that are continuous for
the topology. We claim that there is a topology on each $\CEX$, for
which sheaves adequately model innocent strategies. Fortunately, in
our setting, sheaves admit a simple description, so that in this
expository paper we can avoid the whole machinery. 

\subsection{Innocent strategies}
\begin{defn}
  A \emph{view} is a finite `composition' $n \into V$ of basic moves
  in $\Cospan{\Chat}$.
\end{defn}
  \begin{ex}\label{ex:covpara}
    Forking~\eqref{eq:para} has two non-trivial views, namely the
    basic moves~\eqref{eq:paraviews}.
  \end{ex}
\begin{ex}
  In Figure~\ref{fig:exex}, the left-hand branch contains a view
  consisting of three basic moves: two $\paralof{2}$ and an input.
\end{ex}

  \begin{ex}\label{ex:cov}
    The embeddings
    \begin{center}
    \begin{tikzpicture}
      \matrix(m)[stringdiag={.3}{.4}]{
        \& \& \&  \& \joueur{t_2} \\
        \canal{t0} \& \& \& \& \& \& \& \canal{t1}  \\ 
        \& \& \joueur{t_1}  \&  \\
        \& \ \& \\
        \coord{i0} \& \& \& \couppara{para} \& \& \& \& \coord{i1} \\ 
        \& \ \& \\
        \& \&  \\
        \canal{s0} \& \& \& \joueur{s} \& \& \& \& \canal{s1}  \\
      } ;
      \path[->] %
      (para) edge[-] (t_1) %
      (para) edge[-] (t_2) %
      (t0) edge[-] (t_1) %
      (t0) edge[-] (t_2) %
      (t1) edge[-,fore] (t_1) %
      (t1) edge[-] (t_2) %
      (s0) edge[-] (s) %
      (s1) edge[-] (s) %
      (s0) edge[-] (t0) %
      (s1) edge[-] (t1) %
      (s) edge[-] (para) %
      (i0) edge[-,gray,very thin] (para) %
      (i1) edge[-,gray,very thin] (para) %
      ; %
      \matrix(m1)[stringdiag={.3}{.4}] at (-4,-2) {
        \& \& \&   \\
    \canal{t0i} \& \& \& \& \& \& \& \canal{t1i} \\ 
    \& \& \joueur{t_1i}  \\
    \& \ \& \\
    \coord{i0i} \& \& \& \coupparacreux{parai} \& \& \& \& \coord{i1i} \\
    \& \ \& \\
    \& \&  \\
    \canal{s0i} \& \& \& \joueur{si} \& \& \& \& \canal{s1i} \\
  } ; %
  \path[->] %
    (t0i) edge[-] (t_1i) %
    (t1i) edge[-] (t_1i) %
    (s0i) edge[-] (si) %
    (s1i) edge[-] (si) %
    (s0i) edge[-] (t0i) %
    (s1i) edge[-] (t1i) %
    (si) edge[-] (parai) %
    (parai) edge[-] (t_1i) %
    (i0i) edge[-,gray,very thin] (parai) %
    (i1i) edge[-,gray,very thin] (parai) %
    ; %
    \matrix(m2)[stringdiag={.3}{.4}] at (4,-2) {%
    \& \& \&  \& \joueur{t_2ii} \\
    \canal{t0ii} \& \& \& \& \& \& \& \canal{t1ii}  \\ 
    \& \&   \&  \\
    \& \ \& \\
    \coord{i0ii} \& \& \& \coupparacreux{paraii} \& \& \& \& \coord{i1ii} \\ 
    \& \ \& \\
    \& \&  \\
    \canal{s0ii} \& \& \& \joueur{sii} \& \& \& \& \canal{s1ii} \\
  } ; %
  \path[->] %
    (t0ii) edge[-] (t_2ii) %
    (t1ii) edge[-] (t_2ii) %
    (s0ii) edge[-] (sii) %
    (s1ii) edge[-] (sii) %
    (s0ii) edge[-] (t0ii) %
    (s1ii) edge[-] (t1ii) %
    (sii) edge[-] (paraii) %
    (paraii) edge[-] (t_2ii) %
    (i0ii) edge[-,gray,very thin] (paraii) %
    (i1ii) edge[-,gray,very thin] (paraii) %
  ; %
  \path[->] (m1.north) edge[into,bend left] (m.west) %
  (m2.north) edge[linto,bend right] (m.east) %
  ; %
  \end{tikzpicture}
    \end{center}
      have views as domains.
  \end{ex}

For any position $X$, let $\CVX$ be the subcategory of $\CEX$
consisting of views. 
\begin{defn}
  Let the category $\SS_X$ of \emph{innocent strategies} on $X$ be the
  category $\CVXhat$ of presheaves on $\CVX$.
\end{defn}
A possible interpretation is that for a presheaf $F \in \CVXhat$ and
view $V \in \CV_X$, $F (V)$ is the set of possible \emph{states} of
the strategy $F$ after playing $V$.

It might thus seem that we could content ourselves with defining only
views, as opposed to full plays.  However, in order to define
interactive equivalences in Section~\ref{sec:inter}, we need to view
innocent strategies as (particular) presheaves on the whole of $\CEX$.

\begin{wrapfigure}{r}{0pt}
  \Diag(1.1,1){%
  }{%
    |(C)| \C \& \& \& |(D)| \D \\
    \& |(E)| \E \& %
  }{%
    (C) edge[labelu={F}] (D) %
    edge[bend right,labelbl={G}] node[pos=.4] (G) {} (E) %
    (D) edge[bend right] node[pos=.5] (al) {}
    node[pos=.2,anchor=north] {$\scriptstyle H$} (E) %
    edge[bend left,labelbr={K}] node[pos=.5] (br) {} (E) %
    (br) edge[cell={0},labelr={\scriptstyle \alpha'}] (al) %
    (al) edge[cell={0},bend right=10,labelu={\scriptstyle
      \varepsilon}] (G) (br) edge[cell={.1}] node[pos=.6,anchor=south]
    {$\scriptstyle \alpha$} (G) %
  }
\end{wrapfigure}
The connection is as follows. Recall from MacLane~\cite{MacLane:cwm}
the notion of \emph{right Kan extension}. Given functors $F$ and $G$
as on the right, a right Kan extension $\Ran_F (G)$ of $G$ along $F$
is a functor $H \colon \D \to \E$, equipped with a natural
transformation $\epsilon \colon HF \to G$, such that for all functors
$K \colon \D \to \E$ and transformations $\alpha \colon KF \to G$,
there is a unique $\alpha' \colon K \to H$ such that $\alpha =
\epsilon \mathbin{\rond_1} (\alpha' \rond \id_F)$, where $\rond_1$ is
vertical composition of natural transformations.  Now, precomposition
with $F$ induces a functor $\Cat(F,\E) \colon \Cat (\D,\E) \to \Cat
(\C,\E)$, where $\Cat (\D,\E)$ is the category of functors $\D \to \E$
and natural transformations between them. When $\E$ is complete, right
Kan extensions always exist (and an explicit formula for our setting
is given below), and choosing one of them for each functor $\C \to \E$
induces a right adjoint to $\Cat (F,\E)$. Furthermore, it is known
that when $F$ is full and faithful, then $\epsilon$ is a natural
isomorphism, i.e., $HF \iso G$, which entails that $\Ran_F$ is a full
essential embedding.

Returning to views and plays, the embedding $i_X \colon \CVX \into
\CEX$ is full, so right Kan extension along $\op{i_X} \colon \op{\CVX}
\to \op{\CEX}$ induces a full essential embedding $\Ran_{\op{i_X}}
\colon \CVXhat \to \CEXhat$.  The (co)restriction of this essential
embedding to its essential image thus yields an essentially
surjective, fully faithful functor, i.e., an equivalence of
categories:
\begin{prop}\label{prop:synsem}
  The category $\SS_X$ is equivalent to the essential image of
  $\Ran_{\op{i_X}}$.
\end{prop}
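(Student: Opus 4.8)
The plan is to read the statement off the general facts about right Kan extensions recalled just before it, once we know that $i_X$ is full and faithful; in effect the proposition is a repackaging of the paragraph immediately preceding it, so the argument is short.

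First I would record the two ingredients. The inclusion $i_X \colon \CVX \into \CEX$ is faithful, being a subcategory inclusion (an embedding in the sense of the Notation paragraph), and it is full --- this is the one non-formal input, part of the development of views. Hence $\op{i_X} \colon \op\CVX \to \op\CEX$ is again full and faithful. Since $\Set$ is complete, the right Kan extension $\Ran_{\op{i_X}} \colon \CVXhat \to \CEXhat$ exists; here I would invoke the Size remark and work with a small skeleton of $\CEX$, so that the pointwise limit formula for $\Ran_{\op{i_X}}$ applies. By the fact quoted from MacLane --- when the functor one extends along is full and faithful, the counit $\epsilon \colon HF \to G$ of the Kan extension is a natural isomorphism --- the functor $\Ran_{\op{i_X}}$ is a full essential embedding, i.e.\ fully faithful and essentially injective on objects.

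It then remains to apply the elementary observation that any fully faithful functor $H \colon \A \to \BB$ corestricts to an equivalence $\A \simeq H[\A]$, where $H[\A] \subseteq \BB$ is the full subcategory spanned by the objects isomorphic to some $HA$, i.e.\ the essential image of $H$: this corestriction is still fully faithful and is essentially surjective by construction, hence an equivalence. Taking $H = \Ran_{\op{i_X}}$ gives $\SS_X = \CVXhat \simeq (\text{essential image of } \Ran_{\op{i_X}})$, as claimed.

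I do not expect a genuine obstacle here: the only delicate points are bookkeeping --- verifying fullness of $i_X$ (handled where views are introduced) and handling the size issue by passing to a small skeleton before forming the Kan extension --- and everything else is a direct assembly of the generalities already recalled.
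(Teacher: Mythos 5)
Your argument is exactly the paper's: the proposition is stated as an immediate consequence of the preceding paragraph, which observes that $i_X$ is full and faithful, hence $\Ran_{\op{i_X}}$ is a full essential embedding, and corestriction to its essential image yields an equivalence. Your additional care about size (invoking the skeleton from the Size remark so the pointwise end formula applies) matches the paper's own remark and is a welcome bit of explicitness, not a deviation.
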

The standard characterisation of right Kan extensions as
ends~\cite{MacLane:cwm} yields, for any $F \in \CVXhat$ and $U \in
\CEX$:
$$\Ran_{\op {i_X}}(F)(U) = \int_{V \in \CVX} F(V)^{\CEX(V,U)},$$
i.e., an element of $\Ran_{\op {i_X}}(F)$ on a play $U$ consists, for
each view $V$ and morphism $V \to U$, of an element of $F(V)$,
satisfying some compatibility conditions. In Example~\ref{ex:ran}
below, we compute an example right Kan extension.

The interpretation of strategies in terms of states extends: for any
presheaf $F \in \CEXhat$ and play $U \in \CE_X$, $F (U)$ is the set of
possible \emph{states} of the strategy $F$ after playing $U$.  That
$F$ is in the image of $\Ran_{\op{i_X}}$ amounts to $F (U)$ being a
compatible tuple of states of $F$ after playing each view of $U$.

\begin{ex}\label{ex:ran}
  Here is an example of a presheaf $F \in \CEXhat$ which is not
  innocent, i.e., not in the image of $\Ran_{\op{i_X}}$. Consider the
  position $X$ consisting of three players, say $x,y,z$, sharing a
  name, say $a$. Let $X_x$ be the subposition with only $x$ and $a$,
  and similarly for $X_y$, $X_z$, $X_{x,y}$, and $X_{x,z}$.  Let $I_x
  = (\iotaneg{1,0} \otni X_x \into X)$ be the play where $x$ inputs on
  $a$, and similarly let $O_y$ and $O_z$ be the plays where $y$ and
  $z$ output on $a$, respectively. Let $F (I_x) = F (O_y) = F (O_z) =
  1$ be singletons. Let now $S_{x,y} = (\tauof{1}{0}{1}{0} \otni
  X_{x,y} \into X)$ be the play where $x$ and $y$ synchronise on $a$
  ($x$ inputs and $y$ outputs), and similarly let $S_{x,z}$ be the
  play where $x$ and $z$ synchronise on $a$.  Let $F (S_{x,y}) = 2$ be
  a two-element set, and let $F (S_{x,z}) = \emptyset$.  Finally, let
  $F$ map any subplay of the above plays to a singleton, and any
  strict superplay to the empty set.

  This $F$ fails to be innocent on two counts.  First, since $x$ and
  $y$ accept to input and output in only one way, it is non-innocent
  to accept that they synchronise in more than one way. Formally,
  $S_{x,y}$ has two non-trivial views, $I_x$ and $O_y$, so since $F$
  maps the empty view to a singleton, $F (S_{x,y})$ should be
  isomorphic to $F (I_x) \times F (O_y) = 1 \times 1 = 1$. Second,
  since $x$ and $z$ accept to input and output, it is non-innocent to
  not accept that they synchronise. Formally, $F (S_{x,z})$ should
  also be a singleton. This altogether models the fact that in CCS,
  processes do not get to choose with which other processes they
  synchronise.

  The restriction of $F$ to $\CVX$, i.e., $F' = F \rond \op{i_X}$, in
  turn has a right Kan extension $F''$, which is innocent. (In
  passing, the unit of the adjunction $\Cat (\op{i_X},\Set) \dashv
  \Ran_{\op{i_X}}$ is a natural transformation $F \to F''$.) To
  conclude this example, let us compute $F''$. First, $F'$ only
  retains from $F$ its values on views. So, if $X_x$ denotes the empty
  view on $X_x$, $F' (X_x) = 1$, and similarly $F' (X_y) = F' (X_z) =
  1$. Furthermore, $F' (I_x) = F' (O_y) = F' (O_z) = 1$. Finally, for
  any view $V$ not isomorphic to any of the previous ones, $F' (V) =
  \emptyset$. So, recall that $F''$ maps any play $U \otni Y \into X$
  to $\int_{V \in \CVX} F' (V)^{\CEX(V,U)}$.  So, e.g., since the
  views of $S_{x,y}$ are subviews of $I_x$ and $O_y$, we have $F''
  (S_{x,y}) = F' (I_x) \times F' (O_y) = 1$. Similarly, $F'' (S_{x,z})
  = 1$.  But also, for any play $U$ such that all views $V \to U$ are
  subviews of either of $I_x$, $O_y$, or $O_z$, we have $F'' (U) =
  1$. Finally, for any play $U$ such that there exists a view $V \to
  U$ which is not a subview of any of $I_x$, $O_y$, or $O_z$, we have
  $F'' (U) = \emptyset$.  
\end{ex}
One way to understand Proposition~\ref{prop:synsem} is to view
$\CVXhat$ as the syntax for innocent strategies: presheaves on views
are (almost) infinite terms in a certain syntax (see
Section~\ref{subsec:lang} below).
On the other hand, seeing them as presheaves on plays will allow us to
consider their global behaviour: see Section~\ref{sec:inter} when we
restrict to the closed-world game.
Thus, right Kan extension followed by restriction to closed-world will
associate a semantics to innocent strategies.

So, we have defined for each $X$ the category $\SS_X$ of innocent
strategies on $X$. This assignment is actually functorial $\op\B \to
\CAT$, as follows (where $\CAT$ is the large category of locally small
categories). Any morphism $f \colon Y \to X$ induces a functor $f_!
\colon \CVY \to \CVX$ sending $(V \otni Z \to Y)$ to $(V \otni Z \to Y
\to X)$. Precomposition with $\op{(f_!)}$ thus induces a functor
$\SS_f \colon \CVXhat \to \CVYhat$.
\begin{prop}
  This defines a functor $\SS \colon \op\B \to \CAT$.
\end{prop}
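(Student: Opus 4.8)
The plan is to recognise that this is essentially a bookkeeping verification, and to organise it by factoring $\SS$ through two strictly functorial constructions. On the one hand, $X \mapsto \CVX$, $f \mapsto f_!$ should be a strict \emph{covariant} functor $\B \to \CAT$. On the other hand, the presheaf construction — sending a small category $\C$ to $\Chat$ and a functor $h \colon \C \to \D$ to the precomposition functor $({-}) \rond \op h \colon \Dhat \to \Chat$ — is a strict \emph{contravariant} functor (it is only needed on small categories). Since $\SS_X = \CVXhat$ and $\SS_f = ({-}) \rond \op{f_!}$ by definition, $\SS$ is the composite of these two; and a strict covariant functor followed by a strict contravariant one is a strict contravariant functor $\B \to \CAT$, i.e.\ a functor $\op\B \to \CAT$. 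So it remains to establish the first fact, the second being standard.

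First I would check that $f_!$ is a well-defined functor $\CVY \to \CVX$ for each $f \colon Y \to X$ in $\B$. On objects, $f_!$ sends $(V \otni Z \to Y)$ to $(V \otni Z \to Y \xto{f} X)$, which is an object of $\CVX$ because the play $V \otni Z$ is left untouched — hence is still a view — while the new structure map is simply the composite $Z \to Y \xto{f} X$ in $\Chat$ (a composite of injections, should one keep the expository restriction). On a morphism $(h,k)$ of $\CVY$ — that is, a pair satisfying both the square condition relating $V \otni Z$ and $V' \otni Z'$ and the triangle condition over $Y$ — $f_!$ keeps the pair $(h,k)$ and turns the triangle condition over $Y$ into the triangle condition over $X$ by postcomposing with $f$, which still holds because the original one does; so $(h,k)$ is again a morphism, now in $\CVX$. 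Since $f_!$ is the identity on these underlying pairs, it preserves identities and composition, hence is a functor. Strict functoriality of $f \mapsto f_!$ then follows from associativity and unitality of composition in $\Chat$: postcomposing with $\id_Y$ changes nothing, so $(\id_Y)_! = \id_{\CVY}$, and for composable $f \colon Y \to X$, $g \colon Z \to Y$, postcomposing with $f \rond g$ is postcomposing with $g$ then with $f$, whence $(f \rond g)_! = f_! \rond g_!$ on the nose.

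Unwinding the composite then yields the functoriality laws for $\SS$ directly: $\SS_{\id_X} = ({-}) \rond \op{(\id_{\CVX})} = \id_{\CVXhat}$, and, for composable $f \colon Y \to X$, $g \colon Z \to Y$ in $\B$ (whose composite in $\op\B$ runs $X \to Z$), $\SS_{f \rond g} = ({-}) \rond \op{(f_! \rond g_!)} = \bigl(({-}) \rond \op{g_!}\bigr) \rond \op{f_!} = \SS_g \rond \SS_f$, exactly as a functor $\op\B \to \CAT$ requires. I do not anticipate a genuine obstacle; the only two points that need a little care are (i) checking that $f_!$ really lands in $\CVX$ and respects morphisms — immediate once one observes that the play component is untouched and that only a single postcomposition is performed — and (ii) keeping track of the variances, so that precomposition converts the covariant $f \mapsto f_!$ into the contravariant $\SS$. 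For completeness one also records the mild size point: replacing $\CE$, and hence each $\CVX$, by a small skeleton (as in the remark above) makes every $\CVXhat$ locally small, so it is a legitimate object of $\CAT$, and every $\SS_f$ is a functor between locally small categories, hence a morphism of $\CAT$.
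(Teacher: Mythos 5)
Your proof is correct and is exactly the routine verification that the paper leaves implicit: factoring $\SS$ as the strictly covariant assignment $X \mapsto \CVX$, $f \mapsto f_!$ followed by the contravariant presheaf-and-precomposition construction, and checking each piece. The variance bookkeeping and the size remark are both handled as the paper intends, so there is nothing to add.
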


But there is more: for any position, giving a strategy for each player
in it easily yields a strategy on the whole position. We call this
\emph{amalgamation} of strategies. Formally, consider any subpositions
$X_1$ and $X_2$ of a given position $X$, inducing a partition of the
players of $X$, i.e., such that $X_1 \cup X_2$ contains all the
players of $X$, and $X_1 \cap X_2$ contains none.  Then 
$\CVX$ is isomorphic to the coproduct $\CV_{X_1} +
\CV_{X_2}$. (Indeed, a view contains in particular an initial player
in $X$, which forces it to belong either in $\CV_{X_1}$ or in
$\CV_{X_2}$.) 
\begin{defn}
  Given strategies $F_1$ on $X_1$ and $F_2$ on $X_2$, let their
  \emph{amalgamation} be their image in $\CVX$ via the above
  equivalence, i.e., the copairing $$[F_1,F_2] : \op{\CVX} \iso
  (\CV_{X_1} + \CV_{X_2})^{op} \iso \op{\CV_{X_1}} + \op{\CV_{X_2}}
  \to \Set.$$
\end{defn}
By universal property of coproduct:
\begin{prop}
  Amalgamation yields an equivalence of categories $\CVXhat \equi
  \widehat{\CV_{X_1}} \times \widehat{\CV_{X_2}}.$
\end{prop}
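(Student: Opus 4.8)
The plan is to reduce the statement to two observations: the isomorphism $\CVX \iso \CV_{X_1} + \CV_{X_2}$ granted just above, and the standard fact that the (contravariant) presheaf construction carries coproducts of small categories to products of categories. First I would recall why $\CVX \iso \CV_{X_1} + \CV_{X_2}$: a view carries a distinguished initial player of $X$, and by the partition hypothesis this player lies in exactly one of $X_1$, $X_2$; sorting views accordingly defines a functor $\CVX \to \CV_{X_1} + \CV_{X_2}$, inverse to the copairing $[(j_1)_!, (j_2)_!]\colon \CV_{X_1} + \CV_{X_2} \to \CVX$ of the functors induced by the inclusions $j_i \colon X_i \into X$. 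In particular the two coprojections of the coproduct are, up to this isomorphism, exactly the functors $(j_i)_!$ appearing in the functoriality statement for $\SS$.

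Second, for any small categories $\C$ and $\D$ there is a canonical isomorphism of categories $\widehat{\C + \D} \iso \Chat \times \Dhat$. This follows from $\op{(\C + \D)} \iso \op\C + \op\D$ together with the universal property of a coproduct category: a functor $\op\C + \op\D \to \Set$ is the same datum as a pair of functors $\op\C \to \Set$ and $\op\D \to \Set$, and likewise for natural transformations, so $\Set^{\op\C + \op\D} \iso \Set^{\op\C} \times \Set^{\op\D}$. Concretely this sends a presheaf to the pair of its restrictions along the two coprojections, and a pair $(F_1, F_2)$ to the copairing $[F_1, F_2]$. Composing with the isomorphism of the previous paragraph yields an isomorphism of categories, in particular an equivalence, $\CVXhat \equi \widehat{\CV_{X_1}} \times \widehat{\CV_{X_2}}$. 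Unwinding the definitions, the direction $(F_1, F_2) \mapsto [F_1, F_2]$ is exactly the amalgamation of the preceding definition, while its inverse sends $F$ to $(\SS_{j_1}(F), \SS_{j_2}(F))$ — which is the ``its restrictions to $X_1$ and $X_2$ are $F_1$ and $F_2$'' property announced in the overview.

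I do not anticipate a genuine obstacle. The one slightly delicate ingredient, the isomorphism $\CVX \iso \CV_{X_1} + \CV_{X_2}$, is already established, and everything else is the universal property of coproducts of categories combined with the contravariant hom-into-$\Set$ functor sending coproducts to products. If one wishes to be scrupulous, the only point deserving a line is the naturality of $\widehat{\C + \D} \iso \Chat \times \Dhat$ in the indexing categories, which is what makes the identification compatible with the functor $\SS \colon \op\B \to \CAT$; this is routine, and since the composite is in fact an isomorphism of categories there is nothing further to verify.
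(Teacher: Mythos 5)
Your argument is exactly the paper's: the paper proves this proposition by simply invoking the universal property of the coproduct applied to the already-established isomorphism $\CVX \iso \CV_{X_1} + \CV_{X_2}$, which is precisely what you spell out via $\widehat{\C + \D} \iso \Chat \times \Dhat$. The proposal is correct and just makes the one-line official argument explicit.
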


\begin{ex}
  Consider again the position $X$ from Example~\ref{ex:ran}, and let
  $X_{y,z}$ be the subposition with only $y$ and $z$.  We have $\CVX
  \equi (\CV_{X_x} + \CV_{X_{y,z}})$, which we may explain by hand as
  follows.  A view on $X$ has a base player, $x$, $y$, or $z$, and so
  belongs either in $\CV_{X_x}$ or in $\CV_{X_{y,z}}$. Furthermore, if
  $V$ is a view on $x$ and $W$ is a view on $y$, then $\CVX (V,W) =
  \emptyset$ (and similarly for any pair of distinct players in $X$).

  Now, recall $F'$, the restriction of $F$ to $\CVX$. We may define
  $F_x \colon \op{\CV_{X_x}} \to \Set$ to be the restriction of $F'$
  along the (opposite of the) embedding $\CV_{X_x} \into \CVX$, and
  similarly $F_{y,z}$ to be the restriction of $F'$ along
  $\CV_{X_{y,z}} \into \CVX$. Observe that $F'$ sends any view $V$ on
  $x$ to $F_x (V)$, and similarly for views on $y$ and $z$, we
  conclude that $F'$ is actually the copairing $[F_x, F_{y,z}]$.
\end{ex}

Analogous reasoning leads to the following. For any $X$, let $\Pl (X)$
denote the set of pairs $(n,x)$, where $x$ is a player in $X$, knowing
$n$ names.  This yields the \emph{spatial decomposition} theorem,
where $n$ is abusively seen as the position with one player knowing
$n$ names:
\begin{thm}
  We have $\CVXhat \equi \prod_{(n,x) \in \Pl (X)} \CVnhat$.
\end{thm}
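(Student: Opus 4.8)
The plan is to reduce the theorem to the binary amalgamation already established, by lifting the coproduct decomposition of the view categories from two subpositions to the whole family of players of $X$, and then invoking the standard fact that presheaves send coproducts of small categories to products of presheaf categories. Throughout we work with the small skeleton of $\CE$ mentioned in the size remark, so every category in sight is small.

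First I would generalise the observation preceding the amalgamation proposition. A view on $X$ is a finite composition $Z \into V$ of basic moves together with a morphism $Z \to X$; since basic moves never create players and start from a one-player position, $Z$ is a one-player position. Hence $Z \to X$ picks out a unique player $x$ of $X$ and, all maps being injective and preserving the knows-relation, factors uniquely (using that $X_x \into X$ is monic) through the inclusion $X_x \into X$, where $X_x$ is the subposition made of $x$ and the names it knows. The assignment $V \mapsto x$ is functorial — a morphism of views maps base player to base player, and there is no morphism of $\CVX$ between views with distinct base players — so it fibres $\CVX$ over the players of $X$, with fibre over $x$ isomorphic to $\CV_{X_x}$. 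Together with the functors $\iota_! \colon \CV_{X_x} \to \CVX$ induced by the inclusions $\iota \colon X_x \into X$, this exhibits $\CVX$ as the coproduct $\coprod_{x}\CV_{X_x}$, indexed by the players $x$ of $X$; this is exactly the arbitrary-arity analogue of $\CVX \iso \CV_{X_1} + \CV_{X_2}$.

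Next, for a player $x$ knowing $n$ names, the subposition $X_x$ is isomorphic in $\B$ to the standard one-player position $n$, via the unique order isomorphism between the linearly ordered set of channels known to $x$ and $\{1,\dots,n\}$. Since $\SS \colon \op\B \to \CAT$ is a functor, it carries this isomorphism to an isomorphism $\widehat{\CV_{X_x}} \iso \CVnhat$. On the other hand, a functor $\op{\left(\coprod_i \C_i\right)} = \coprod_i \op{\C_i} \to \Set$ is precisely a family of functors $\op{\C_i} \to \Set$, so $\widehat{\coprod_i \C_i} \iso \prod_i \widehat{\C_i}$; applied to the coproduct of the previous step this gives $\CVXhat \equi \prod_{x} \widehat{\CV_{X_x}}$. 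Combining the two yields $\CVXhat \equi \prod_{(n,x) \in \Pl(X)} \CVnhat$, as required.

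I expect the only genuinely delicate point — the main obstacle — to be the first step: checking, from the precise combinatorial definitions of basic moves and of the morphisms of $\CVX$ (only sketched informally above), that every view really does have a unique base player and that its structure map $Z \to X$ is forced to factor through $X_x$. Once this bookkeeping is in place the remaining steps are routine, and in fact the whole statement comes out of the same mechanism as the amalgamation proposition, with an arbitrary family of players in place of two.
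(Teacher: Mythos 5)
Your proposal is correct and is essentially the paper's own argument: the paper explicitly derives this theorem by ``analogous reasoning'' to the binary amalgamation case, i.e.\ by observing that every view has a unique base player (so $\CVX$ decomposes as a coproduct over the players of $X$, with no morphisms between views on distinct players), identifying each summand with $\CV_n$ for the appropriate $n$, and using that presheaves turn coproducts of categories into products of presheaf categories. Your identification of the one delicate point --- verifying from the combinatorial definitions that the structure map of a view factors uniquely through the subposition of its base player --- is also where the real content lies.
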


There is actually more structure than that, namely the functor $\SS$
is a stack~\cite{Vistoli}, but we do not need to spell out the
definition here.

\subsection{Temporal decomposition and languages}\label{subsec:lang}
Let us now describe \emph{temporal} decomposition.  The main goal
here is to sketch the interpretation of CCS in innocent
strategies. This material is not needed for Section~\ref{sec:inter}.

Recall that full moves are forking~\eqref{eq:para}, tick, name
creation, input, and output.
  \begin{defn}
    Let $\MMM_n$ be the set of all full moves starting from $n$.  For
    each $M \in \MMM_n$, let $\cod (M)$ be the final position of the
    corresponding move.
  \end{defn}
   Strictly speaking, $\MMM_n$ is a proper class, but we may easily
   choose one representative of each isomorphism class. For instance,
   all moves are actually representable presheaves in $\Chat$, so we
   may just pick these.

   To state the temporal decomposition theorem, we need a
   standard~\cite{Jacobs} categorical construction, the category of
   families on a given category $\C$. First, given a set $X$, consider
   the category $\Fam (X)$ with as objects $X$-indexed families of
   sets, i.e., sets $(Y_x)_{x \in X}$, and as morphisms $Y \to Z$
   families $(f_x \colon Y_x \to Z_x)_{x \in X}$ of maps. This
   category is equivalently described as the slice category $\Set /
   X$. To see the correspondence, consider any family $(Y_x)_{x \in
     X}$, and map it to the projection function $\sum_{x \in X} Y_x
   \to X$ sending $(x, y)$ to $x$. Generalising from sets $X$ to
   arbitrary categories $\C$, $\Fam (\C)$, has as objects families $f
   \colon Y \to \ob(\C)$ indexed by the objects of $\C$, and as
   morphisms $(Y, p) \to (Z, q)$ pairs of $u \colon Y \to Z$ and $v
   \colon Y \to \mor(\C)$, such that $q u = p$, $\dom (v (x)) = p
   (x)$, and $\cod (v (x)) = q (u (x)).$ Thus, any element $y \in Y$
   above $C \in \C$ is mapped to some $u (y) \in Z$ above $C' \in \C$,
   and this mapping is labelled by a morphism $v (y) \colon C \to C'$
   in $\C$.  We may now state:
  \begin{thm}\label{thm:temp}
    There is an equivalence of categories $\SS_n \equi \Fam
    \left(\prod_{M \in \MMM_n}\SS_{\cod (M)}\right ).$
 \end{thm}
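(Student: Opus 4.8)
The plan is to reduce Theorem~\ref{thm:temp} to a purely categorical decomposition of the category $\CV_n$ of views, and then push that decomposition through the presheaf construction. Let $\mathcal B_n$ be the set of basic moves starting from $n$: the two half-forks $\paraln,\pararn$, the tick $\tickn$, the name creation $\nun$, and one input $\iotaposni$ and one output $\iotanegni$ for each $i \in n$; for $b \in \mathcal B_n$ write $\cod(b)$ for the one-player position reached at the end of $b$. \emph{Step 1} is to show that $\CV_n$ is equivalent to the category $\CC^{\triangleleft}$ obtained from $\CC := \coprod_{b \in \mathcal B_n} \CV_{\cod(b)}$ by freely adjoining a new initial object. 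Concretely: a view is a linearly ordered stack of single-player basic moves, so every nonempty view from $n$ begins with a uniquely determined first move $b \in \mathcal B_n$ and, after removing it, leaves a view from $\cod(b)$; conversely any view from $\cod(b)$, prefixed by $b$, is a view from $n$. Since we have restricted to injective morphisms, a morphism of views is an embedding mapping base position onto base position, hence a temporal prefix inclusion; therefore two nonempty views admit a morphism only if their first moves agree, in which case such morphisms are exactly the morphisms of the continuations, while the empty view $n \into n$ is initial. I expect this to be the main obstacle, since it is the only place where the formal combinatorics of plays and views (Appendix~\ref{sec:diagrams}) is really used — in particular that, in the injective setting, a morphism of plays fixing the base position is a temporal prefix.

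\emph{Step 2} is a general fact: for any essentially small category $\CC$ there is an equivalence $\widehat{\CC^{\triangleleft}} \equi \Fam(\widehat\CC)$. Indeed, a presheaf $F$ on $\CC^{\triangleleft}$ is exactly the data of a set $S := F(\bot)$, a presheaf $G := \restr{F}{\CC}$ on $\CC$, and, from the unique arrows $\bot \to c$, a cocone $G \to \Delta S$ onto the constant presheaf; since $\colim \colon \widehat\CC \to \Set$ is left adjoint to $\Delta$ and $\colim G \cong \pi_0(\el(G))$, such a cocone is the same as a function $\pi_0(\el(G)) \to S$. Decomposing $G$ into its connected components $G \cong \coprod_{t \in \pi_0(\el(G))} G_t$ and regrouping the $G_t$ over their image in $S$, the triple $(S, G, \text{cocone})$ is precisely an $S$-indexed family of presheaves on $\CC$, and the same bookkeeping identifies morphisms on both sides, so $\widehat{\CC^{\triangleleft}} \equi \Fam(\widehat\CC)$. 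Applying this with $\CC = \coprod_{b} \CV_{\cod(b)}$ and using that $\widehat{(-)}$ turns this coproduct of categories into the product $\prod_{b} \widehat{\CV_{\cod(b)}} = \prod_{b \in \mathcal B_n}\SS_{\cod(b)}$, Step 1 yields $\SS_n = \widehat{\CV_n} \equi \Fam\bigl(\prod_{b \in \mathcal B_n}\SS_{\cod(b)}\bigr)$.

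\emph{Step 3} is to match $\prod_{b \in \mathcal B_n}\SS_{\cod(b)}$ with $\prod_{M \in \MMM_n}\SS_{\cod(M)}$. The full moves starting from $n$ are the basic ones with the pair $\paraln,\pararn$ replaced by the single forking move $\paran$; for every full move $M$ other than forking, $\cod(M)$ is the same one-player position as for the corresponding basic move, so the factors coincide verbatim. For forking, $\cod(\paran)$ is the two-player position of the two avatars, each knowing $n$ names, so the spatial decomposition theorem gives $\SS_{\cod(\paran)} \equi \SS_n \times \SS_n \equi \SS_{\cod(\paraln)} \times \SS_{\cod(\pararn)}$, which is exactly the pair of factors contributed by the half-forks. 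Reindexing the product accordingly, and using that $\Fam(-)$ preserves equivalences, we conclude $\SS_n \equi \Fam\bigl(\prod_{M \in \MMM_n}\SS_{\cod(M)}\bigr)$, as required.
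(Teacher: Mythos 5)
Your proof is correct and realizes exactly the approach the paper sketches: its ``initial states plus what remains after each possible move'' intuition is precisely your decomposition of $\CV_n$ as the coproduct $\coprod_b \CV_{\cod(b)}$ with a freely adjoined initial object (the empty view), with the family construction arising from the general equivalence $\widehat{\CC^{\triangleleft}} \equi \Fam(\widehat{\CC})$ and the passage from basic to full moves handled by spatial decomposition of the forking codomain. The only step carrying real combinatorial weight is the one you correctly flag in Step 1 --- that a base-preserving embedding of views is a temporal prefix inclusion, so that nonempty views are classified by their first basic move --- and that is indeed where the formal presheaf description of plays must be invoked.
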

 The main intuition for the proof is that a strategy is determined up
 to isomorphism by (i) its initial states, and (ii) what remains of
 them after each possible full move.  The family construction is what
 permits non-deterministic strategies: a given move may lead to
 different states.

 \begin{rk}
   The theorem almost makes strategies into a \emph{sketch} (on the
   category with positions as objects, finite compositions of extended
   moves as morphisms, and the $\MMM_X$'s as distinguished
   cones). Briefly, being a sketch would require a bijection of sets
   $\SS_n \iso \prod_{M \in \MMM_n}\SS_{\cod (M)}.$ Here, the
   bijection becomes an equivalence of categories, and the family
   construction sneaks in.
 \end{rk}

Putting the decomposition theorems together, we obtain
$$\SS_n \equi \Fam \left( \prod_{M \in \MMM_n} \prod_{(n', x') \in \Pl (\cod (M))} \SS_{n'}\right),$$
for all $n$. Considering a variant of this formula as a system of
equations will lead to our interpretation of CCS.  The first step is
to consider the formula as an endofunctor $F_0$ on $\Cat / \omega$,
where $\omega$ is the set of finite ordinals, seen as a discrete
category. This functor is defined on any family of categories $\X =
(\X_i)_{i \in \omega}$ by:
$$(F_0 (\X))_n = \Fam \left( \prod_{M \in \MMM_n} \prod_{(n', x') \in \Pl (\cod (M))} \X_{n'}\right).$$
Then, using the notation just before Theorem~\ref{thm:temp}, we
restrict attention to families $Y \to \ob (\C)$ where $Y$ is a finite
ordinal $n \in \omega$. Simultaneously, we restrict attention to
discrete categories $\X$, i.e., we see the formula as a endofunctor on
$\Set / \omega$, i.e., $\omega$-indexed families of sets. This yields, for any family $X = (X_i)_{i \in \omega}$,
$$(F (X))_n = \sum_{I \in \omega} \left( \prod_{M \in \MMM_n} \prod_{(n', x') \in \Pl (\cod (M))} X_{n'}\right)^I.$$

This endofunctor is polynomial~\cite{Kock01012011} and we now give a
characterisation of its final coalgebra.  Let for any category $\C$
the category $\OPsh{\C}$ be the functor category $\op\C \to \FinOrd$,
where $\FinOrd$ is the category of finite ordinals and monotone
functions between them. By composition with the embedding $\FinOrd
\into \Set$, we have an embedding $\OPsh {\C} \into \hat\C$. We have:
\begin{thm}
  The family $\ob(\OPsh {\CV_n})$ formed for each $n$ by the objects of
  $\OPsh {\CV_n}$ is a terminal coalgebra for $F$.
\end{thm}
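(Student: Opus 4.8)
The plan is to identify the final coalgebra of the polynomial endofunctor $F$ explicitly, and then match it with $\ob(\OPsh{\CV_n})$ using the temporal decomposition theorem (Theorem~\ref{thm:temp}) and the spatial decomposition theorem together. First I would record the shape of $F$: it is the polynomial functor on $\Set/\omega$ given by
$$(F(X))_n = \sum_{I \in \omega} \left( \prod_{M \in \MMM_n} \prod_{(n',x') \in \Pl(\cod(M))} X_{n'}\right)^I,$$
so a point of $(F(X))_n$ is a finite number $I$ of ``initial states'', each of which assigns, to every full move $M$ out of $n$ and every player $(n',x')$ of $\cod(M)$, a point of $X_{n'}$. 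A final coalgebra for such a polynomial functor exists (it is finitary in the relevant sense: the set $\MMM_n$ of full moves out of $n$ is finite, and each $\cod(M)$ has finitely many players), and it is obtained as the limit of the terminal sequence $1 \leftarrow F(1) \leftarrow F^2(1) \leftarrow \cdots$, or concretely as the set of well-founded/non-well-founded trees matching the signature. The key point I would extract is a \emph{coinductive characterisation}: a family $X$ with a coalgebra structure $X_n \to (F(X))_n$ is final iff the induced map is an isomorphism and $X$ is, in the appropriate sense, ``observational'' — two states identified exactly when they have the same unfolding.

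Next I would produce the coalgebra structure on $\ob(\OPsh{\CV_n})$. Here the idea is that $\OPsh{\CV_n}$ is the category of functors $\op{\CV_n} \to \FinOrd$, i.e.\ presheaves on views valued in finite ordinals, and these are ``finitely-branching innocent strategies''. By Theorem~\ref{thm:temp} we have an equivalence $\SS_n \equi \Fam(\prod_{M \in \MMM_n} \SS_{\cod(M)})$; restricting this equivalence to the finite-ordinal-valued presheaves and to finite indexing sets — which is exactly the passage from $\Fam$/$\Cat/\omega$ to $\sum$/$\Set/\omega$ described in the paragraph before the theorem — yields a bijection
$$\ob(\OPsh{\CV_n}) \;\iso\; \sum_{I \in \omega}\; \prod_{M \in \MMM_n}\; \ob\bigl(\OPsh{\CV_{\cod(M)}}\bigr),$$
and then the spatial decomposition theorem $\CVXhat \equi \prod_{(n',x') \in \Pl(X)} \CVnhat[n']$, again restricted to finite-ordinal values, rewrites $\ob(\OPsh{\CV_{\cod(M)}})$ as $\prod_{(n',x') \in \Pl(\cod(M))} \ob(\OPsh{\CV_{n'}})$. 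Composing, this is precisely $F$ applied to the family $(\ob(\OPsh{\CV_{n'}}))_{n'}$, so we get an isomorphism $\ob(\OPsh{\CV_n}) \iso (F(\ob(\OPsh{\CV})))_n$, natural in the obvious way, hence a coalgebra structure. One has to be careful that the equivalences of Theorem~\ref{thm:temp} and the spatial theorem do restrict correctly to the finite-ordinal / finite-index sub-settings; this is where the hypotheses that $\MMM_n$ and $\Pl(\cod(M))$ are finite get used.

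Finally I would check finality. Having the isomorphism above, it remains to see that this coalgebra is the \emph{terminal} one, i.e.\ that the unique coalgebra morphism from any $F$-coalgebra into it is an isomorphism onto, or more simply that $\ob(\OPsh{\CV_n})$ satisfies the universal property directly. The cleanest route: show the canonical map from $\ob(\OPsh{\CV_n})$ into the limit of the terminal sequence is a bijection. Unfolding, an element of $\lim_k F^k(1)$ at index $n$ is a compatible tower of finite truncations, which is exactly the data of the values of a presheaf $\op{\CV_n} \to \FinOrd$ on views of bounded length together with the restriction maps, for all lengths — because a view in $\CV_n$ is by definition a \emph{finite} composition of basic moves, so $\CV_n$ is filtered by view-length and every view has finite length. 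Thus a presheaf on $\CV_n$ is exactly a compatible family over these truncations, giving the bijection with the limit. I expect the main obstacle to be precisely this last identification: making rigorous that the terminal-sequence limit ``is'' the presheaf category $\OPsh{\CV_n}$, which requires knowing that $\CV_n$ is generated under composition by the basic moves in a way that exactly mirrors the iterated application of $F$ (each application of $F$ corresponds to prepending one more full move, and spatial decomposition to splitting across avatars/players). Subtleties about whether one is prepending or appending moves, and about the coherence of the two decomposition equivalences when iterated, are where the real work lies; the rest is bookkeeping with polynomial functors and their well-understood final coalgebras.
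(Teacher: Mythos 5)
The paper states this theorem without proof (it is an expository extended abstract; the only thing it does with the statement afterwards is apply Lambek's lemma to extract the fixed-point bijection~\eqref{eq:syntax}), so there is no official argument to measure you against. Your strategy is the natural one and is consistent with everything the surrounding text signals: obtain the structure map by restricting the temporal and spatial decomposition equivalences to the finite setting, observe that the pairs $(M,(n',x'))$ with $M \in \MMM_n$ and $(n',x') \in \Pl(\cod(M))$ are exactly the basic moves out of $[n]$, and then prove terminality by identifying $\ob(\OPsh{\CV_n})$ with the limit of the terminal sequence $1 \ot F1 \ot F^2 1 \ot \cdots$ via the filtration of $\CV_n$ by view length. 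Two small corrections to your framing: finitariness of $\MMM_n$ is not what makes the terminal sequence converge at $\omega$ --- polynomial functors preserve cofiltered (indeed all connected) limits regardless of the size of the exponents, which is the property you actually need; and your intermediate ``coinductive characterisation'' (fixed point plus ``observational'') is not a criterion you can use as stated, but you rightly discard it in favour of the terminal-sequence computation.

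The one point you under-emphasise, and which is arguably the reason the theorem is stated for $\OPsh{\CV_n}$ rather than for $\widehat{\CV_n}$, is the role of $\FinOrd$ with \emph{monotone} maps in turning the two decomposition \emph{equivalences} into honest \emph{bijections on objects}. A terminal coalgebra in $\Set/\omega$ requires an actual function $\ob(\OPsh{\CV_n}) \to (F(\ob(\OPsh{\CV})))_n$ that is a bijection, not merely an equivalence of categories restricted to objects. This works because the value of a presheaf on the empty view is a concrete finite ordinal $I$, and each restriction map $P(V;b) \to P(V)$, being monotone between finite ordinals, is \emph{determined by its fibre cardinalities} and decomposes $P(V;b)$ canonically as an ordered sum of intervals, each again a finite ordinal. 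With set-valued presheaves (or with arbitrary functions between finite sets) the passage from a presheaf to its tower of truncations would not be injective, and the fixed-point map could only be an equivalence, not an isomorphism. You gesture at this (``one has to be careful that the equivalences restrict correctly''), but the monotonicity is doing real work and deserves to be made explicit. The remaining load-bearing fact --- that $\CV_n$ is, up to isomorphism, the prefix order on finite sequences of typed basic moves, with no nontrivial automorphisms, so that a presheaf on it is exactly a compatible tower of truncations --- you correctly identify as the place where the real combinatorial work lies; it rests on the rigidity of views, which the paper asserts informally but does not prove either.
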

By Lambek's lemma~\cite{Lambeklemma}, there is a bijection (between
the objects)
\begin{equation}
\OPsh{\CV_n} \iso \sum_{I \in \omega} \left (\prod_{M \in \MMM_n} \prod_{(n',x) \in \Pl (\cod (M))} \OPsh{\CV_{n'}} \right )^I.
\label{eq:syntax}
\end{equation}

In particular, the family $\OPsh{\CV_n}$ supports the operations of the grammar
\begin{mathpar}
\inferrule{\ldots \ \ \   n \vdash F_i \ \ \  \ldots \\ (\forall i \in I)}{
n \vdash \sum_{i \in I} F_i}~(I \in \omega) \and
\inferrule{\ldots \ \ n' \vdash F_{M,n',x} \ \ \ldots \\ {(\forall M \in \MMM_n, (n',x) \in \Pl (\cod (M)))}%
}{
n \vdash \langle(M,(n',x)) \mapsto F_{M,(n',x)}\rangle 
}~\cdot
\end{mathpar}
Here, $n \vdash F$ denotes a presheaf of finite ordinals on $\CV_n$.
The interpretation is as follows: given presheaves $F_0, \ldots,
F_{I-1}$, for $I \in \omega$, the leftmost rule constructs the finite
coproduct $\sum_{i \in I} F_i$ of presheaves (finite coproducts exist in
$\OPsh{\CV_n}$ because they do in $\FinOrd$). In particular, when $I$
is the empty ordinal, we sum over an empty set, so the rule
degenerates to
\begin{mathpar}
  \inferrule{ }{n \vdash \emptyset}~\cdot
\end{mathpar}
In terms of presheaves, this is just the constantly empty presheaf.

For the second rule, if for all $M, n', x$, we are given $F_{M,(n',x)}
\in \OPsh{\CV_{n'}}$, then $\langle (M,(n',x)) \mapsto
F_{M,(n',x)}\rangle$ denotes the image under~\eqref{eq:syntax} of
$$(1, 0 \mapsto M \mapsto (n', x) \mapsto F_{M,n',x}).$$
Here, we provide an element of the right-hand side
of~\eqref{eq:syntax}, consisting of the finite ordinal $I = 1 =
\ens{0}$, and the function mapping $(M, n', x)$ to $F_{M,n',x} \in
\OPsh{\CV_{n'}}$ (up to currying). That was for parsing; the
intuition is that we construct a presheaf with one initial state, $0$,
which maps any view starting with $(M, n', x)$, say $M; V$, to $F_{M,
  n', x} (V)$. Thus the $F_{M,n',x}$'s specify what remains of our
presheaf after each possible basic move. In particular, when all the
$F_{M,n',x}$'s are empty, we obtain a presheaf which has an initial
state, but which does nothing beyond it. We abbreviate it as $0 =
\langle \_ \mapsto \emptyset \rangle$.

\subsection{Translating CCS}
It is rather easy to translate CCS into this language.  First, define
CCS syntax by the natural deduction rules in Figure~\ref{fig:ccs},
where $\Names$ and $\Vars$ are two fixed, disjoint, and infinite sets
of \emph{names} and \emph{variables}; $\Xi$ ranges over finite
sequences of pairs $(x \colon n)$ of a variable $x$ and its
\emph{arity} $n \in \omega$; $\Gam$ ranges over finite sequences of
names; there are two judgements: $\Gam \vdash P$ for \emph{global}
processes, $\Xi;\Gam \vdash P$ for \emph{open} processes.  Rule
\textsc{Global} is the only rule for forming global processes, and
there $\Xi = (x_1 \colon \card{\Del_1}, \ldots, x_n \colon
\card{\Del_n})$. Finally, $\alpha$ denotes $a$ or $\abar$, for $a \in
\Names$, and $\floor{a} = \floor{\abar} = a$.
\begin{figure}[t]
  \centering
  \begin{mathpar}
\inferrule[\textsc{CCSApp}]{ }{\Xi ; \Gam \vdash x (a_1, \ldots, a_{n})}~((x \colon n) \in \Xi \mbox{\ and\ } a_1, \ldots, a_{n} \in \Gam)
  \and
\inferrule{\Xi; \Gam, a \vdash P 
}{ %
  \Xi;\Gam \vdash \nu a.P %
}
\and
\inferrule{%
  \ldots \\ \Xi; \Gam \vdash P_i  \\ \ldots \\ (\forall i \in I)
}{ %
  \Xi;\Gam \vdash \sum_{i \in I} \alpha_i.P_i %
}~(I \in \omega \mbox{\ and\ } \forall i \in I, \floor{\alpha_i} \in \Gam)
\and
\inferrule{\Xi; \Gam \vdash P \\
\Xi; \Gam \vdash Q 
}{ %
  \Xi;\Gam \vdash P|Q %
}
\and
\inferrule[\textsc{Global}]{\Xi ; \Del_1 \vdash P_1 \\ \ldots \\ \Xi ; 
  \Del_n \vdash P_n \\
  \Xi ; \Gam \vdash P 
}{
  \Gam \vdash \recin{x_1 (\Del_1) \eq P_1, \ldots,
  x_n (\Del_n) \eq P_n}{P} %
}
\end{mathpar}
  \caption{CCS syntax}
  \label{fig:ccs}
\end{figure}

First, we define the following (approximation of a) translation on
open processes, mapping each open process $\Xi; \Gam \vdash P$ to
$\transl{P} \in \OPsh{\CV_n}$, for $n = \card{\Gam}$.  This
translation ignores the recursive definitions, and we will refine it
below to take them into account. We proceed by induction on $P$,
leaving contexts $\Xi; \Gam$ implicit:
 $$\begin{array}[t]{r@{\ \mapsto\ }lll}
 x (a_1, \ldots, a_{k}) & \emptyset \\
   P|Q & \langle \begin{array}[t]{r@{\ \mapsto\ }l}
    (\paraof{n}, n, t_1) & \transl{P}, \\
    (\paraof{n}, n, t_2) & \transl{Q}, \\
    (\_, \_) & \emptyset \ \ \rangle 
  \end{array}     
  \end{array}
  \begin{array}[t]{r@{\ \mapsto\ }lll}
  \nu a.P &  \langle (\nuof{n}, n+1, t) \mapsto \transl{P}, (\_,\_) \mapsto \emptyset \rangle \\
  \sum_{i \in I} \alpha_i.P_i &  \langle
  \begin{array}[t]{r@{\ \mapsto\ }l}
    ((\iotaposnj, n, t) &  \sum_{k \in I_{\overline{j}}} \transl{P_k}, \\
    (\iotanegnj, n, t) &  \sum_{k \in I_{j}} \transl{P_k}\ )_{j \in n}, \\
    (\_, \_) &  \emptyset\ \ 
    \rangle.
  \end{array}\end{array}$$
Let us explain intuitions and notation.  In the first case, we assume
implicitly that $(x \colon k) \in \Xi$; the intuition is just that we
approximate variables with empty strategies.  Next, $P|Q$ is
translated to the strategy with one initial state, which only accepts
the forking move first, and then lets its avatars play $\transl{P}$
and $\transl{Q}$, respectively. In the definition, we denote by $t_1$
and $t_2$ the two players of the final position in the forking
move~\eqref{eq:para}. Furthermore, here and in all relevant cases, $n$
is the number of names in $\Gam$. Similarly, $\nu a.P$ is translated
to the strategy with one initial state, accepting only the name
creation move, and then playing $\transl{P}$. Here and in the next
case, $t$ is the player of the final position in the involved move.
In the last case, the guarded sum $\sum_{i \in I} \alpha_i.P_i$ is
translated to the strategy with one initial state, which
\begin{itemize}
\item accepts input on any channel $a$ when $\alpha_i = a$ for some $i
  \in I$, and output on any channel $a$ when $\alpha_i = \abar$ for
  some $i \in I$;
\item after an input on $a$, plays the sum of all $\transl{P_i}$'s
  such that $\alpha_i = a$; and after an output on $a$, plays the sum
  of all $\transl{P_i}$'s such that $\alpha_i = \abar$.
\end{itemize}
Formally, in the definition, we let for all $j \in n$
$I_{\overline{j}} = \ens{i \in I \aalt \alpha_i = \overline{a_j}}$ and
$I_j = \ens{i \in I \aalt \alpha_i = a_j}$.  In particular, for the
last case, when $I = \emptyset$, we obtain $0$.

Thus, almost all translations of open processes have exactly one
initial state, i.e., map the empty view on $n$ to the singleton
$1$. The only exceptions are variable applications, which are mapped
to the empty presheaf.

The translation extends to global processes as follows. Fixing a
global process $\recin{x_1 (\Del_1) = P_1, \ldots, x_k (\Del_k) =
  P_k}{P}$ typed in $\Gam$ with $n$ names, define the sequence
$(P^i)_{i \in \omega}$ of open processes (all typed in $\Xi;\Gam$) as
follows.  First, $P^0 = P$. Then, let $P^{i+1} = \partial P^i$, where
$\partial$ is the \emph{derivation} endomap on open processes typed in
any extension $\Xi; (\Gam,\Del)$ of $\Xi;\Gam$, which unfolds one
layer of recursive definitions. This map is defined by induction on
its argument as follows:
 $$\begin{array}[t]{r@{\ =\ }lll}
\partial(x_l (a_1, \ldots, a_{k_l})) & P_l \subs{b_j \mapsto a_j}_{1 \leq j \leq k_l} \\
  \partial(P|Q) & {\partial P} | {\partial Q} 
  \end{array} \hspace*{1cm}
  \begin{array}[t]{r@{\ =\ }lll}
    \partial(\nu a.P) &  \nu a. \partial P \\
    \partial(\sum_{i \in I} \alpha_i.P_i) &  \sum_{i \in I} \alpha_i.(\partial P_i),
\end{array}$$
where for all $l \in \ens{1, \ldots, k}$, $\Del_l = (b_1, \ldots,
b_{k_l})$, and $P\subs{\sigma}$ denotes simultaneous, capture-avoiding
substitution of names in $P$ by $\sigma$.

By construction, the translations of these open processes form a
sequence $\transl{P^0} \into \transl{P^1} \ldots$ of inclusions in
$\OPsh {\CV_n}$, such that for any natural number $i$ and view $V \in
\CV_n$ of length $i$ (i.e., with $i$ basic moves), $\transl{P^j} (V)$
is fixed after $j = (k+1)i$, at worst, i.e., for all $j \geq (k+1)i$,
$\transl{P^j} (V) = \transl{P^{(k+1)i}} (V)$.  Thus, this sequence has a
colimit in $\OPsh {\CV_n}$, the presheaf sending any view $V$ of
length $i$ to $\transl{P^{(k+1)i}} (V)$, which we take as the
translation of the original process.

Which equivalence is induced by this mapping on CCS, especially when
taking into account the interactive equivalences developed in the next
section? This is the main question we will try to address in future
work.

\section{Interactive equivalences}\label{sec:inter}
\subsection{Fair testing vs.\ must testing: the standard case}
An important part of concurrency theory consists in studying
\emph{behavioural equivalences}. Since each such equivalence is
supposed to define when two processes behave the same, it might seem
paradoxical to consider several of them.  Van
Glabbeek~\cite{DBLP:conf/concur/Glabbeek90} argues that each
behavioural equivalence corresponds to a physical scenario for
observing processes.

A distinction we wish to make here is between \emph{fair} scenarios,
and \emph{potentially unfair} ones. An example of a fair scenario is
when parallel composition of processes is thought of as modelling
different physical agents, e.g., in a game with several players.
Otherwise said, players are really independent. On the other hand, an
example of a potentially unfair scenario is when parallelism is
implemented via a scheduler.

Mainstream notions of processes, e.g., transition systems or automata,
are actually unfair, as the following example shows.  Consider a
looping process $\Omega$, which has a silent transition $\tau$ to
itself.  The process $P = (\Omega | \abar)$, which in parallel plays
$\Omega$ and tries to synchronise on $a$, has an infinite trace
$$P \xto{\tau} P \xto{\tau} \ldots$$

This has consequences on so-called \emph{testing}
equivalences~\cite{DBLP:journals/tcs/NicolaH84}. Let $\tick$ be a
fixed action.
\begin{defn}
  A process $P$ is \emph{must orthogonal} to a context $C$, notation
  $P \mathrel\bot^m C$, when all maximal traces of $C[P]$ play $\tick$ at some
  point.
\end{defn}
Here, maximal means either infinite or finite without extensions.
Let $P^{\bot^m}$ be the set of all contexts must orthogonal to $P$.

\begin{defn}
  $P$ and $Q$ are \emph{must equivalent}, notation $P \sim_m Q$,
  when $P^{\bot^m} = Q^{\bot^m}$.
\end{defn}

In transition systems, or automata, recalling $P$ above and letting $Q
= \Omega$, we have $P \sim_m Q.$ This might be surprising, because the
context $C = a.\tick \para \trou$ intuitively should distinguish $P$
from $Q$, by being orthogonal to $P$ but not to $Q$. However, it is
not orthogonal to $P$, because $C[P]$ has an infinite looping trace
giving priority to $\Omega$.  This looping trace is unfair, because
the synchronisation on $a$ is never performed.  Thus, one may view the
equivalence $P \sim_m Q$ as taking into account potential unfairness
of a hypothetical scheduler.  Usually, concurrency theorists consider
this too coarse, and resort to \emph{fair} testing equivalence.
  \begin{defn}
    A process $P$ is \emph{fair orthogonal} to a context $C$, notation
    $P \mathrel\bot^f C$, when all finite traces of $C[P]$ extend to traces
    that play $\tick$ at some point.
  \end{defn}
  Again, $P^{\bot^f}$ denotes the set of all contexts fair orthogonal
  to $P$.
  \begin{defn}
    $P$ and $Q$ are \emph{fair equivalent}, notation $P \sim_f Q$,
    when $P^{\bot^f} = Q^{\bot^f}$.
  \end{defn}
  This solves the issue, i.e., $P \nsim_f Q$.  

  In summary, the mainstream setting for testing equivalences relies
  on traces; and the notion of maximality for traces is intrinsically
  unfair. This is usually rectified by resorting to fair testing
  equivalence over must testing equivalence.
  Our setting is more flexible, in
  the sense that maximal plays are better behaved than maximal
  traces. In terms of the previous section, this allows viewing the
  looping trace $P \xto{\tau} P \xto{\tau} \ldots$ as non-maximal. In
  the next sections, we define an abstract notion of interactive
  equivalence (still in the particular case of CCS but in our
  setting), instantiate it to define fair and must testing
  equivalence, which, as we finally show, coincide.

\subsection{Interactive equivalences}
\begin{defn}
  A play is \emph{closed-world} when all its inputs
  and outputs are part of a synchronisation.
\end{defn}

Let $\CW \into \CE$ be the full subcategory of closed-world plays,
$\CWofX$ being the \emph{fibre} over $X$ for the projection functor
$\CW \to \Eh$, i.e., the subcategory of $\CW$ consisting of
closed-world plays with base $X$, and morphisms $(\id_X, k)$ between
them\footnote{This is not exactly equivalent to what could be noted
  $\CW_X$, since in the latter there are objects $U \otni Y \into X$
  with a strict inclusion $Y \into X$. However, both should be
  equivalent for what we do in this paper, i.e., fair and must
  equivalences.}.

Let the category of \emph{closed-world behaviours} on $X$ be the
category $\GlX = \CWofXhat$ of presheaves on $\CWofX$.  We may now
put:
  \begin{defn}
  An \emph{observable criterion} consists for all positions $X$, of a
  replete subcategory $\bbot_X \into \GlX$.
\end{defn}
Recall that $\bbot_X$ being replete means that for all $F \in \bbot_X$
and isomorphism $f \colon F \to F'$ in $\GlX$, $F'$ and $f$ are in
$\bbot_X$.  

An observable criterion specifies the class of `successful',
closed-world behaviours. The two criteria considered below are two
ways of formalising the idea that a successful behaviour is one in
which all accepted closed-world plays are `successful', in the sense
that some player plays the tick move at some point.

We now define interactive equivalences.  Recall that $[F, G]$ denotes
the amalgamation of $F$ and $G$, and that right Kan extension along
$\op{i_Z}$ induces a functor $\Ran_{\op{i_Z}} \colon \widehat{\CV_Z}
\to \widehat{\CE_Z}$. Furthermore, precomposition with the canonical
inclusion $j_Z \colon \CW (Z) \into \CE_Z$ induces a functor $j_Z^*
\colon \widehat{\CE_Z} \to \widehat{\CW (Z)}$.  Composing the two, we
obtain a functor $\SStoGG \colon \SS_Z \to \Gl_Z$:
$$\SS_Z = \widehat{\CV_Z} \xto{\Ran_{\op{i_Z}}} \widehat{\CE_Z} \xto{j_Z^*} \widehat{\CW (Z)} = \Gl_Z.$$

\noindent \begin{minipage}[c]{0.7\linewidth}
\begin{defn}\label{defn:orth}
  For any strategy $F$ on $X$ and any pushout square $P$ of positions
  as on the right, with $I$ consisting only of channels, let
  $F^{\bbot_P}$ be the class of all strategies $G$ on $Y$ such that
  $\SStoGG([F, G]) \in \bbot_Z$.
\end{defn}
  \end{minipage}
  \begin{minipage}[c]{0.28\linewidth}
  \begin{equation}
    \Diag(.3,.8){%
      \pbk{X}{Z}{Y} %
    }{%
      |(I)| I \& |(Y)| Y \\
      |(X)| X \& |(Z)| Z %
    }{%
      (I) edge (Y) %
      edge (X) %
      (X) edge (Z) %
      (Y) edge (Z) %
    }\label{eq:orth}
  \end{equation}
  \end{minipage}

  Here, $G$ is thought of as a \emph{test} for $F$. Also, $P$ denotes
  the whole pushout square and $F^{\bbot_P}$ is notation for a notion
  indexed by such squares, whose definition uses $\bbot_Z \into
  \Gl_Z$.  From the CCS point of view, $I$ corresponds to the set of
  names shared by the process under observation $(F)$ and the testing
  context $(G)$.

\begin{defn} Any two strategies $F,F'\in\SSX$ are
  \emph{$\bbot$-equivalent}, notation $F \sim_{\bbot} F'$, iff for all
  pushouts $P$ as in~\ref{eq:orth}, $F^{\bbot_P} = {F'}^{\bbot_P}$.
\end{defn}

\subsection{Fair vs.\ must}\label{subsec:fairmust}
Let us now define fair and must testing equivalences.  Let a
closed-world play be \emph{successful} when it contains a $\tickn$.
Furthermore, for any closed-world behaviour $G \in \GlX$ and
closed-world play $U \in \CWofX$, an \emph{extension} of a state
$\state \in G (U)$ to $U'$ is a $\statei \in G (U')$ with $i \colon U
\to U'$ and $G (i)(\statei) = \state$. The extension $\statei$ is
\emph{successful} when $U'$ is. The intuition is that the behaviour
$G$, before reaching $U'$ with state $\sigma'$, passed through $U$
with state $\sigma$.
  \begin{defn}
    The \emph{fair} criterion $\bbot^f$ contains all closed-world
    behaviours $G$ such that any state $\state \in G (U)$ for finite $U$
    admits a successful extension.
  \end{defn}

  Now call an extension of $\state \in G (U)$ \emph{strict} when $U
  \to U'$ is not surjective, or, equivalently, when $U'$ contains more
  moves than $U$.  For any closed-world behaviour $G \in \GlX$, a state
  $\state \in G (U)$ is $G$-\emph{maximal} when it has no strict
  extension.

  \begin{defn}
  Let the \emph{must} criterion $\bbot^m$ consist of all closed-world behaviours
  $G$ such that for all closed-world $U$ and $G$-maximal $\state \in G
  (U)$, $U$ is successful.
\end{defn}

We now show that fair and must testing equivalence coincide. The key
result for this is:
\begin{thm}\label{thm:fairmust}
  For any strategy $F$ on $X$, any state $\state \in \SStoGG(F) (U)$
  with finite $U$ admits a $\SStoGG(F)$-maximal extension.
\end{thm}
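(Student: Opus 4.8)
The plan is to reduce the statement to a property of the syntactic presentation of $\SStoGG(F)$, and then to exhibit an explicit maximal extension by a greedy but carefully controlled construction. First I would use Proposition~\ref{prop:synsem} to replace $F$ by its right Kan extension $\Ran_{\op{i_X}}(F')$ on $\CEX$ and then restrict along $j_X$, so that $\SStoGG(F)(U) = \int_{V \in \CVX} F'(V)^{\CEX(V,U)}$ for closed-world $U$; concretely, a state $\state \in \SStoGG(F)(U)$ is a compatible family assigning to each view $V \to U$ an element of $F'(V)$. The spatial decomposition theorem lets me think of $F'$ as a family of per-player innocent strategies, and the temporal decomposition theorem (Theorem~\ref{thm:temp}) describes each of these by its initial states together with the residual after each full move; this is exactly the data needed to decide, at each player and each point of the play $U$, which basic moves the strategy is willing to perform next.

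The core construction is then the following. Starting from the finite closed-world play $U$ with state $\state$, I build an increasing chain $U = U_0 \to U_1 \to U_2 \to \cdots$ of closed-world plays together with compatible states $\state_k \in \SStoGG(F)(U_k)$ restricting to $\state$, where each step $U_k \to U_{k+1}$ adds a single \emph{closed-world} move that the current state permits: a forking, name-creation, or tick move for a single player, or a synchronisation between two players, chosen so that it is consistent with the residual strategies read off from temporal decomposition. The key point, and the place where our flexible notion of play (as opposed to traces) matters exactly as advertised in Section~\ref{subsec:fairmust}, is that a player offering an input need not block: either it can be matched by a partner offering the dual output, in which case we insert a synchronisation, or it genuinely cannot, in which case — crucially — in the closed-world game that input is simply \emph{not available as a strict extension at all}, so it does not obstruct maximality. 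Running this process exhaustively, and taking the colimit $U_\omega = \colim_k U_k$ with the induced state $\state_\omega$, yields an extension in which, by construction, no player at any point still has an unfired permitted closed-world move; hence $\state_\omega$ is $\SStoGG(F)$-maximal. (One must check $U_\omega$ is still a legitimate play, i.e.\ a denumerable composition of moves — it is, being a countable colimit of such — and that $\state_\omega$ is well-defined, which follows because each view of $U_\omega$ has finite length and is therefore already present in some $U_k$.)

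The main obstacle I expect is \emph{fairness of the scheduling} in this greedy construction: a naive strategy of always firing the first available move could starve some player forever (exactly the pathology of the looping trace $P \xto{\tau} P \xto{\tau} \cdots$), producing a colimit that is not maximal because some player retained an enabled move. The fix is to interleave fairly — enumerate all (player, point, enabled-move) obligations and discharge each one eventually — which is possible precisely because $U$ is countable and each $U_k$ is countable, so at every stage only countably many obligations exist and a diagonal enumeration handles them all. A secondary technical point to nail down is compatibility of the states $\state_k$ along the chain: when we fire a move for one player, the residual strategies of the \emph{other} players are unchanged, and innocence (the right-Kan-extension form of $\SStoGG(F)$) guarantees that extending the compatible family to the new views introduced by the move is both possible and canonical, using the temporal-decomposition description of the residual at the player that moved. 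Once maximality of $\state_\omega$ is established, Theorem~\ref{thm:fairmust} follows, and with it the coincidence of $\bbot^f$ and $\bbot^m$.
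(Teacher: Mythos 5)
Your plan is a faithful (and considerably more detailed) elaboration of exactly the argument the paper gives, which is only the one-line remark that the proof ``amounts to implementing a scheduler in our framework --- a fair one, of course'': you build the maximal extension by fairly (diagonally) scheduling all enabled closed-world moves, take the colimit, and observe that unmatched inputs/outputs are not closed-world extensions and hence cannot obstruct maximality. This is essentially the same approach as the paper's, correctly identifying both the fairness issue and the reason the closed-world restriction makes the greedy construction succeed.
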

The proof basically amounts to implementing a scheduler in our
framework --- a fair one, of course.  Thanks to the theorem, we have:
  \begin{lem}\label{lem:fairmust}
    For all $F \in \SS_X$, $\SStoGG(F) \in \bbot^m_X$ ~iff~ $\SStoGG(F)
    \in \bbot^f_X$.
\end{lem}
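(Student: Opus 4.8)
The plan is to derive the lemma directly from Theorem~\ref{thm:fairmust} by unwinding the two criteria. Fix $F \in \SS_X$ and write $G = \SStoGG(F) \in \GlX$ for brevity. We must show $G \in \bbot^m_X$ iff $G \in \bbot^f_X$.

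First I would prove the implication $G \in \bbot^m_X \Rightarrow G \in \bbot^f_X$, and this is where Theorem~\ref{thm:fairmust} is essential. Assume $G \in \bbot^m_X$, i.e., for every closed-world $U$ and every $G$-maximal $\state \in G(U)$, the play $U$ is successful. Let $U$ be finite and $\state \in G(U)$; we must produce a successful extension. By Theorem~\ref{thm:fairmust} (applied to $F$, $U$, and $\state$), there is a $G$-maximal extension $\statei \in G(U')$ of $\state$, with $i \colon U \to U'$ and $G(i)(\statei) = \state$. Since $\statei$ is $G$-maximal, the $\bbot^m$ hypothesis gives that $U'$ is successful, i.e., contains a $\tickn$. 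Hence $\statei$ is a successful extension of $\state$, as required, so $G \in \bbot^f_X$.

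Next I would prove the converse $G \in \bbot^f_X \Rightarrow G \in \bbot^m_X$. Assume $G \in \bbot^f_X$ and let $U$ be closed-world with a $G$-maximal $\state \in G(U)$; we must show $U$ is successful. The subtlety is that $U$ need not be finite, whereas $\bbot^f$ only constrains states over finite plays. To handle this, I would argue that a $G$-maximal state already lies, in the relevant sense, over a finite play: a state $\state \in G(U)$ over an infinite $U$ is never $G$-maximal, since $U$ is a colimit of its finite subplays $U_k \into U$ (each play being a countable composition of moves), and $\state$ restricts along each $U_k \into U$; picking any $U_k$ strictly below $U$ exhibits a strict extension of $\restr{\state}{U_k}$, contradicting an analysis of maximality --- more carefully, one shows directly that if $\state \in G(U)$ with $U$ infinite then $\state$ is not $G$-maximal because $U$ itself already has no maximal finite approximation, hence $\state$ admits strict extensions within its own restrictions. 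Thus $G$-maximal states live over finite $U$. Now, with $U$ finite, apply the $\bbot^f$ hypothesis to $\state \in G(U)$: there is a successful extension $\statei \in G(U')$, with $i \colon U \to U'$. If $i$ is not surjective then $\statei$ is a \emph{strict} extension, contradicting $G$-maximality of $\state$; so $i$ is surjective, whence $U' \iso U$ (a surjective morphism of plays over $X$ between finite plays with the same number of moves is an isomorphism), and therefore $U$ is successful because $U'$ is. This gives $G \in \bbot^m_X$.

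The main obstacle I expect is the size/infiniteness bookkeeping in the second implication: making precise the claim that a $G$-maximal state over a closed-world play must actually sit over a \emph{finite} closed-world play, so that the $\bbot^f$ condition is applicable. This rests on the structure of plays as countable colimits of moves (from the Definition of play) together with the fact that $\SStoGG(F)$, being built from a right Kan extension, sends such a colimit of plays to the corresponding limit of state-sets, so that states over an infinite play are compatible families of states over its finite truncations; a strictly smaller truncation then always yields a strict extension. The first implication, by contrast, is essentially a one-line consequence of Theorem~\ref{thm:fairmust}. No new categorical machinery beyond what is already in the excerpt is needed.
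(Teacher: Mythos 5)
Your overall route is the paper's: the $(\Rightarrow)$ direction is exactly the paper's one-line application of Theorem~\ref{thm:fairmust}, and the core of your $(\Leftarrow)$ direction --- a successful extension of a $G$-maximal state cannot be strict, hence lives on $U$ itself, hence $U$ is successful --- is also precisely the paper's argument. Where you depart from the paper is the extra step you insert to reconcile the fact that $\bbot^f$ only constrains states over \emph{finite} plays with the fact that $\bbot^m$ quantifies over all closed-world plays. You are right that there is a mismatch here (the paper's own proof silently applies the fair hypothesis to a possibly infinite $U$), but your patch does not work.

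The claim that a $G$-maximal state can only sit over a finite play is neither established by your argument nor true. Showing that each restriction $\restr{\state}{U_k}$ to a finite truncation $U_k \into U$ admits a strict extension (namely back into $U$) says nothing about $\state$ itself: a strict extension of $\state$ must be a state over a play strictly larger than $U$ restricting to $\state$, and the existence of larger finite truncations of $U$ does not produce one. For a concrete counterexample, take $F = \transl{\recin{x (a) \eq 0 \mid x (a)}{x (a)}}$ on a one-player position: the infinite closed-world play in which every $x$-avatar forks and every $0$-avatar is inert carries a unique state of $\SStoGG(F)$, and that state is maximal, since any additional move would have to be played by an avatar whose strategy refuses everything. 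So maximal states over infinite plays do exist (here both criteria fail, so the lemma itself is unharmed --- only your intermediate claim is). As it stands, then, your $(\Leftarrow)$ direction proves must-success only for maximal states over \emph{finite} plays, and the infinite case remains open; closing it requires a genuinely different argument about infinite closed-world plays, which neither your proof nor the paper's one-line version supplies.
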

\begin{proof}
  Let $G = \SStoGG (F)$.

  \emph{($\Rightarrow$)} By Theorem~\ref{thm:fairmust}, any state
  $\state \in G (U)$ has a $G$-maximal extension $\statei \in G (U')$,
  which is successful by hypothesis, hence $\state$ has a successful
  extension.

\emph{($\Leftarrow$)} Any $G$-maximal $\state \in G (U)$ admits by hypothesis
a successful extension which may only be on $U$ by $G$-maximality, and
hence $U$ is successful.
\end{proof}

Now comes the expected result:
  \begin{thm}
For all $F, F' \in \SS_X$, $F \sim_{\bbot^m} F'$ iff $F \sim_{\bbot^f} F'$.
\end{thm}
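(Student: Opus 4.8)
The plan is to deduce the final theorem directly from Lemma~\ref{lem:fairmust} by chasing the definitions of $\sim_{\bbot^m}$ and $\sim_{\bbot^f}$. Unfolding the definition of $\bbot$-equivalence, we must show that for all $F, F' \in \SS_X$, the condition ``for all pushouts $P$ as in~\eqref{eq:orth}, $F^{\bbot^m_P} = {F'}^{\bbot^m_P}$'' is equivalent to the same condition with $\bbot^f$ in place of $\bbot^m$. It therefore suffices to prove that for any fixed pushout square $P$ (with legs $X \to Z$ and $Y \to Z$) and any $F \in \SS_X$, the class $F^{\bbot^m_P}$ equals $F^{\bbot^f_P}$; the theorem then follows immediately since the equality of the two equivalences is a pointwise (in $P$) matter.

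So fix $P$ and $F$, and recall from Definition~\ref{defn:orth} that $F^{\bbot_P}$ is the class of strategies $G$ on $Y$ such that $\SStoGG([F,G]) \in \bbot_Z$. Here $[F,G] \in \SS_Z$ is the amalgamation, which is itself an innocent strategy on $Z$. Thus $\SStoGG([F,G])$ is of the form $\SStoGG(H)$ for $H = [F,G] \in \SS_Z$. Now I would apply Lemma~\ref{lem:fairmust}, instantiated at the position $Z$ and the strategy $H = [F,G]$: it gives $\SStoGG(H) \in \bbot^m_Z$ iff $\SStoGG(H) \in \bbot^f_Z$. Hence $G \in F^{\bbot^m_P}$ iff $\SStoGG([F,G]) \in \bbot^m_Z$ iff $\SStoGG([F,G]) \in \bbot^f_Z$ iff $G \in F^{\bbot^f_P}$. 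This shows $F^{\bbot^m_P} = F^{\bbot^f_P}$, and quantifying over $P$ and comparing with ${F'}$ in place of $F$ finishes the proof.

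I expect the only real subtlety to be bookkeeping: one must be careful that Lemma~\ref{lem:fairmust} is stated for an arbitrary position (it is phrased for $\SS_X$ with $X$ generic), so applying it at $Z$ rather than $X$ is legitimate, and that every strategy arising as $\SStoGG([F,G])$ is genuinely in the image of $\SStoGG$ restricted to innocent strategies on $Z$ — which it is, by construction, since amalgamation $[F,G]$ lives in $\SS_Z = \widehat{\CV_Z}$. There is no heavy machinery here: Theorem~\ref{thm:fairmust} (the fair scheduler) already did the work, Lemma~\ref{lem:fairmust} packaged it, and the present theorem is a two-line corollary. The main ``obstacle'', such as it is, is simply making explicit that $\bbot$-equivalence is defined as a conjunction over all test squares $P$, so that a per-$P$ equality of orthogonality classes propagates to equality of the induced equivalences.
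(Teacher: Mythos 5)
Your proof is correct and follows essentially the same route as the paper: both arguments are immediate corollaries of Lemma~\ref{lem:fairmust} applied at the amalgamated position $Z$ to $[F,G] \in \SS_Z$. The only cosmetic difference is that you isolate the (slightly stronger) intermediate fact $F^{\bbot^m_P} = F^{\bbot^f_P}$ for each fixed $F$ and $P$, whereas the paper writes the same content as a chain of ``iff''s; the mathematical substance is identical.
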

\begin{proof}  
  \emph{($\Rightarrow$)} Consider two strategies $F$ and $F'$ on $X$,
  and a strategy $G$ on $Y$ (as in the pushout~\eqref{eq:orth}). We
  have, using Lemma~\ref{lem:fairmust}:
  \begin{center}
    $\SStoGG(F \glue G) \in \bbot^f$ \hfil iff \hfil $\SStoGG(F \glue G) \in \bbot^m$ \hfil iff \hfil $\SStoGG(F'
    \glue G) \in \bbot^m$ \hfil iff \hfil $\SStoGG(F' \glue G) \in \bbot^f$.
  \end{center}
    \emph{($\Leftarrow$)}  Symmetric.
  \end{proof}

  To explain what is going on here, let us consider again $P = (\Omega
  | \abar)$, $Q = \Omega$, and the context $C = a.\tick \para
  \trou$. We implement $C$ by choosing as a test the strategy $T =
  \transl{a.\tick}$ on a single player knowing one name $a$. Taking
  $I$ to consist of the sole name $a$, the pushout $Z$ as in
  Definition~\ref{defn:orth} consists of two players, say $x$ for the
  observed strategy and $y$ for the test strategy, sharing the name
  $a$.  Now, assuming that $\Omega$ loops deterministically, the
  global behaviour $G = \SStoGG([\transl{P}, T])$ has exactly one state on
  the empty play, and again exactly one state on the play $\paraof{1}$
  consisting of only one fork move by $x$. Thus, $G$ reaches a
  position with three players, say $x_1$ playing $\Omega$, $x_2$
  playing $\abar$, and $y$ playing $a.\tick$.  What makes the theorem
  work is that the play with $\omega$ silent moves by $x_1$ is not
  maximal.  Indeed, we could insert (anywhere in the sequence of moves
  by $x_1$) a synchronisation move by $x_2$ and $y$, and then a tick
  move by the avatar of $y$. Essentially: our notion of play is more
  fair than just traces.

\appendix

\section{Diagrams}\label{sec:diagrams}
In this section, we define the category on which our diagrams are
presheaves. The techniques used here date back at least to Carboni and
Johnstone~\cite{DBLP:journals/mscs/CarboniJ95,DBLP:journals/mscs/Johnstone04}.
Let us first consider two baby examples. It is well-known that
directed multigraphs form a presheaf category: consider the category
$\C$ freely generated by the graph with two vertices, say $\star$ and
$[1]$, and two edges $d, c \colon \star \to [1]$ between them.  One
way to visualise this is to compute the \emph{category of elements} of
a few presheaves on $\C$. Recall that the category of elements of a
presheaf $F$ on $\C$ is the comma category $y \downarrow_{\Chat} F$,
where $y$ is the Yoneda embedding. Via Yoneda, it has as elements
pairs $(C, x)$ with $C \in \ob (\C)$ and $x \in F (C)$, and morphisms
$(C,x) \to (D,y)$ morphisms $f \colon C \to D$ in $\C$ such that $F
(f) (y) = x$ (which we abbreviate as $y \cdot f = x$ when the context
is clear).
\begin{ex}
  Consider the presheaf $F$ defined by the following equations, whose
  category of elements is actually freely generated by the graph on
  the right:
\begin{center}
    \begin{minipage}[c]{0.3\linewidth}
      \begin{itemize}
      \item $F (\star) = \ens{0,1,2}$,
      \item $F ([1]) = \ens{e, e'}$,
      \end{itemize}
    \end{minipage}
    \hfil
    \begin{minipage}[c]{0.2\linewidth}
      \begin{itemize}
      \item $e \cdot d = 0$,
      \item $e \cdot c = 1$,
      \item $e' \cdot d = 1$,
      \item $e' \cdot c = 2$,
      \end{itemize}
    \end{minipage}
      \hfil
    \diag(.15,.8){%
      \& \& |(un)| 1 \& \\ %
      \& |(e)| e \& \& |(e')| e' \\ %
      |(zer)| 0 \& \& \& \& |(deux)| 2. %
    }{%
      (zer) edge[labelal={d}] (e) %
      (un) edge[labelal={c}] (e) %
      edge[labelar={d}] (e') %
      (deux) edge[labelar={c}] (e') %
    }
\end{center}
This graph is of course not exactly the expected one, but it does
represent it. Indeed, for each vertex we know whether it is in $F
(\star)$ or $F ([1])$, hence whether it represents a `vertex' or an
`edge'. The arrows all go from a `vertex' $v$ to an `edge' $e$. They
are in $F (d)$ when $v$ is the domain of $e$, and in $F (c)$ when $v$
is the codomain of $e$.
\end{ex}

Multigraphs may also be seen as a presheaves on the category freely
generated by the graph with
\begin{itemize}
\item as vertices: one special vertex $\star$, plus for each natural
  number $n$ a vertex, say, $[n]$; and
\item $n+1$ edges $\star \to [n]$, say $d_1, \ldots, d_n$, and $c$.
\end{itemize}
It should be natural for presheaves on this category to look like
multigraphs: the elements of a presheaf $F$ above $\star$ are the
vertices in the multigraph, the elements above $[n]$ are the $n$-ary
multiedges, and the action of the $d_i$'s give the $i$th source of a
multiedge, while the action of $c$ gives its target.
\begin{ex}
  Similarly, computing a few categories of elements might help
  visualising. As above, consider $F$ defined by
\begin{center}
    \begin{minipage}[c]{0.3\linewidth}
      \begin{itemize}
      \item $F (\star) = \ens{0,1,2,3,4}$,
      \item $F ([1]) = F ([0]) = \emptyset$,
      \item $F ([2]) = \ens{e'}$,
      \item $F ([3]) = \ens{e}$,
      \item $F ([n+4]) = \emptyset$,
      \end{itemize}
    \end{minipage}
    \hfil
    \begin{minipage}[c]{0.2\linewidth}
      \begin{itemize}
      \item $e \cdot c = 0$,
      \item $e \cdot d_1 = 1$,
      \item $e \cdot d_2 = 2$,
      \item $e \cdot d_3 = 3$,
      \end{itemize}
    \end{minipage}
    \hfil
    \begin{minipage}[c]{0.2\linewidth}
      \begin{itemize}
      \item $e' \cdot c = 1$,
      \item $e' \cdot d_1 = 4$,
      \item $e' \cdot d_2 = 5$,
      \end{itemize}
    \end{minipage}
\end{center}
whose category of elements is freely generated by the graph:
\begin{center}
    \diag(.3,.8){%
      \& \& |(zero)| 0 \& \\ %
      \& \& |(e)| e \& \\ %
      \& |(un)| 1 \& |(deux)| 2 \& |(trois)| 3 \\ %
      \& |(e')| e' \& \\ %
      |(quatre)| 4 \& \& |(cinq)| 5. %
    }{%
      (zero) edge[labelr={c}] (e) %
      (un) edge[labelal={d_1}] (e) %
      edge[labell={c}] (e') %
      (deux) edge[labelr={d_2}] (e) %
      (trois) edge[labelar={d_3}] (e) %
      (quatre) edge[labelal={d_1}] (e') %
      (cinq) edge[labelar={d_2}] (e') %
    }
\end{center}
\end{ex}

Now, this pattern may be extended to higher dimensions. Consider for
example extending the previous base graph with a vertex $[m_1,
\ldots,m_n; p]$ for all natural numbers $n, p, m_1, \ldots, m_n$, plus
edges $s_1\colon [m_1] \to [m_1, \ldots, m_n; p], \ldots, s_n\colon
[m_n] \to [m_1, \ldots, m_n;p]$, and $t \colon [p] \to [m_1, \ldots,
m_n;p]$. Let now $\C$ be the free category on this extended
graph. Presheaves on $\C$ are a kind of 2-multigraphs: they have
vertices, multiedges, and multiedges between multiedges.

We could continue this in higher dimensions.

Defining the base category of the paper follows a very similar
pattern. We start from a slightly different graph: let $\G_0$ have
just one vertex $\star$; let $\G_1$, have one vertex $\star$, plus a
vertex $[n]$ for each natural number $n$, plus $n$ edges $d_1, \ldots,
d_n \colon \star \to [n]$.  Let $\C_0$ and $\C_1$ be the categories
freely generated by $\G_0$ and $\G_1$, respectively. So, presheaves on
$\C_1$ are a kind of hypergraphs with arity (since vertices incident
to a hyperedge are numbered). This is enough to model positions.

Now, consider the graph $\G_2$, which is $\G_1$ augmented with:
\begin{itemize}
\item for all $n$, vertices $\tickn$, $\paraln$, $\pararn$, $\nun$, 
\item for all $n$ and $0 \leq i < n$, vertices $\iotaposni$ and $\iotanegni$,
\item for all $n$, edges $s, t \colon [n] \to \tickn$, $s, t \colon [n] \to \paraln$, $s, t \colon [n] \to \pararn$, 
$s \colon [n] \to \nun$, $t \colon [n+1] \to \nun$, 
\item for all $n$ and $0 \leq i < n$, edges $s, t \colon [n] \to \iotaposni$, $s,t \colon [n] \to \iotanegni$.
\end{itemize}
Note that only name creation changes the number of names known to the
player, and accordingly the corresponding morphism $t$ has domain
$[n+1]$. We slightly abuse language here by calling all these $t$'s
and $s$'s the same. We could label them with their codomain, but we
refrain doing so for the sake of readability.

Now, let $\C_2$ be the category generated by $\G_2$ and the relations
$s \rond d_i = t \rond d_i$ for all $n$ and $0 \leq i < n$ (for all
possible --- common --- codomains with main index $n$ for $s$ and
$t$).  Presheaves on $\C_2$ are enough to model views, but since we
want more, we continue, as follows.

Let $\G_3$ be $\G_2$, augmented with:
\begin{itemize}
\item for all $n$, a vertex $\paran$, and 
\item edges $l \colon \paraln \to \paran$ and $r \colon \pararn \to \paran$.
\end{itemize}
Let $\C_3$ be the category generated by $\G_3$ and the relations $l
\rond s = r \rond s$ (this models the fact that a forking move should
be played by just one player).  Presheaves on $\C_3$ are enough to
model full moves; to model closed-world moves, and in particular
synchronisation, we continue as follows.

Let $\G_4$ be $\G_3$, augmented with, for all $n$, $m$, $0 \leq i <
n$, and $0 \leq j < m$,
 \begin{itemize}
 \item a vertex $\taunimj$, and
 \item edges $\sender \colon \iotaposni \to \taunimj$ and $\receiver
   \colon \iotanegmj \to \taunimj$.
 \end{itemize}
 Let $\C_4$ be the category generated by $\G_4$ and the relations
 $\sender \rond s \rond d_i = \receiver \rond s \rond d_j$ (which
 models the fact that a synchronisation involves an input and an
 output on the same name).

\bigskip
\paragraph{Acknowledgements} Thanks to the courageous having endured
the first versions of this work. Special thanks to Paul-Andr\'e
Melli\`es for his graphical design skills, and to pseudonymous referee
Michel Houellebecq, not only for our very useful and enjoyable
discussion, but also for tolerating our rather poor litterary
style. Finally, thanks to Mark Weber for teaching the first author the
techniques of Appendix~\ref{sec:diagrams}.

\bibliographystyle{eptcs} \bibliography{bib}
\end{document}